\definecolor{brightlavender}{rgb}{0.75, 0.58, 0.89}
\definecolor{lavenderpurple}{rgb}{0.59, 0.48, 0.71}
\definecolor{lavender}{rgb}{0.71, 0.49, 0.86}
\definecolor{lavenderpink}{rgb}{0.98, 0.68, 0.82}
\definecolor{c1}{RGB}{92,89,191}
\definecolor{c2}{RGB}{191,92,98}
\definecolor{c3}{RGB}{98,191,92}
\newtheorem{theorem}{Theorem}
\newtheorem{corollary}{Corollary}
\def \ie {\textit{i.e., }}
\def \eg {\textit{e.g., }}
\def \R {\mathbb{R}}
\def \Ha {\mathcal{H}^\alpha}
\def \Hb {\mathcal{H}^\beta}
\def \Ga {\mathcal{G}^\alpha}
\def \Gb {\mathcal{G}^\beta}
\definecolor{mydkgreen}{cmyk}{0.8,0.0,1.0,0.50}
\def \bx {\pmb{x}}
\def \bH {{\bf H}}
\title{{Symmetries and Cluster Synchronization in Multilayer Networks}}
\begin{document}


\author[1,2]{Fabio Della Rossa} 
\author[3]{Louis Pecora}
\author[1]{Karen Blaha}
\author[1]{Afroza Shirin}
\author[1]{Isaac Klickstein}
\author[1,4]{Francesco Sorrentino}

\affil[1]{University of New Mexico, Albuquerque, NM, 87131, USA}
\affil[3]{U.S. Naval Research Laboratory, Washington, DC 20375, USA}
\affil[2]{Politecnico di Milano, Piazza Leonardo da Vinci 32, 20133 Milano, Italy}
\affil[4]{Corresponding author fsorrent@unm.edu}

\maketitle

\begin{abstract}

{Real-world systems in epidemiology, social sciences, power transportation, economics and engineering are often described as 
multilayer networks.
Here we first define and compute the symmetries of multilayer networks, and then study the emergence of cluster synchronization in these networks.
We distinguish between independent layer symmetries which occur in one layer and are independent of the other layers and dependent layer symmetries which involve nodes in different layers.
We study stability of the cluster synchronous solution by decoupling the  problem into a number of independent blocks and assessing stability of each block through a Master Stability Function.
We see that blocks associated with dependent layer symmetries have a different structure than the other blocks, which affects the stability of clusters associated with these symmetries.
Finally, we validate the theory in a fully analog experiment in which seven electronic oscillators of three kinds are connected with two kinds of coupling.} 
\end{abstract}

\section*{Introduction}

{Real-world complex systems often contain heterogeneous components;
these components may interact in multiple ways via complex connectivity patterns, leading to complex dynamics.}
For example, the power grid contains both a transmission and a communication network, and we must model both to understand phenomena such as {cascade failures \cite{buldyrev2010catastrophic,korkali2017reducing} and blackouts duration \cite{rezai2017key,Rosato_2008bq}. In this context, a recent paper \cite{rezai2017key} has shown that increased coupling between the power and the communication layers can be beneficial in reducing vulnerability of the system as a whole.}
Neurological systems can be modeled with varying levels of complexity depending on the particular behavior of interest. 
Some approaches have a single kind of neurons which interact via multiple interaction layers~\cite{Pereda_2014,Song_2015,adhikari_2011,sorrentino2012synchronization};
central pattern generator approaches~\cite{goulding2009circuits, lodi2018design, danner2016central} have multiple kinds of neurons (identified by kind, such as extensor, flexor, synapses generator, etc.) which interact via multiple interaction layers.
The mathematical formalism introduced for these types of problems is the {multilayer network}~\cite{Kivela_2014,boccaletti2014structure,taylor2016enhanced}.

We are interested in symmetries and synchronization of multilayer networks of coupled oscillators;
synchronization is a collective behavior in which the dynamics of the network nodes converge on the same time-evolution.
The first study of synchronization in multilayer networks was presented in \cite{sorrentino2012synchronization,irving2012synchronization} and more recently in \cite{del2016synchronization}. 
These papers studied complete synchronization (all nodes synchronize on the same time-evolution) and consider diffusive coupling (the coupling matrices are Laplacian). 
Populations may exhibit more complex forms of synchronization, such as clustered synchronization (CS), where clusters of nodes exhibit synchronized dynamics but different clusters evolve on distinct time evolutions;
many papers consider CS in networks {formed of nodes all of the same type and connections all of the same type}
\cite{belykh2001cluster,belykh2008cluster,nicosia2013remote,pecora2014cluster,sorrentino2016complete,sorrentino2016approximate,cho2017stable,siddique2018symmetry}.

The emergence of synchronized clusters in a network requires that the set of the network nodes is partitioned into subsets of nodes, called equitable clusters, such that all the nodes in the same cluster receive the same total input from each one of the clusters, leading to the same temporal evolution~\cite{golubitsky2016rigid, belykh2003persistent,schaub2016graph}. 
These partitions often arise from the network symmetries~\cite{BDMacArthur_2008jk,klickstein2018generating}. A recent paper studied the effects of symmetry on collective dynamics in communities of coupled oscillators \cite{skardal2019symmetry}.
When equitable partitions arise from symmetry, a rigorous, group theory-based framework exists to analyze the stability of cluster synchronization in simple networks~\cite{pecora2014cluster,sorrentino2016complete,siddique2018symmetry}. 
However, cluster synchronization in multilayer networks has not been studied other than in a recent paper which focused on experimental observations {on a special multilayer network composed of nodes all of the same type}~\cite{blaha2019cluster}. 

{
This paper significantly advances the field of network dynamics by presenting one unified theory that addresses the problem of cluster synchronization in arbitrary multilayer networks, where each layer is formed of homogeneous units, but different layers have different units. {Our main contributions are twofold:  we define and compute the symmetries of multilayer networks and 
 we study the emergence of cluster synchronization in these networks analytically and experimentally. This involves analyzing stability of the cluster synchronous solution, where the stability problem is decoupled into a number of lower dimensional blocks of equations 
and a validation of the theory in a fully analog experiment with three layers, each one formed of different types of oscillators. 
We analyze CS in arbitrary multilayer networks, for which nodes are coupled through both intra-layer connections (inside each layer) and inter-layer connections (between different layers.) 
 We see that CS patterns of synchronization in multilayer networks are determined by the
symmetries of each individual layer but also the particular pattern of interconnectivity between layers. We see that only nodes from the same layer may be permuted among each other. However, other symmetries may involve simultaneous swaps of nodes in different layers. 
This leads to a classification of symmetries into {Independent Layer Symmetries} (ILS) which occur in one layer and are independent of swaps in other layers and {Dependent Layer Symmetries} (DLS) which require  moving nodes in different layers.} In what follows, we first present a general set of dynamical equations for the time evolution of  multilayer networks, we then define and compute the group of symmetries of multilayer networks with different kinds of nodes (each corresponding to a different layer) and different kinds of connections, for which stability of the CS solution is analyzed, and finally we apply the theoretical framework to predict the emergence of CS in an experimental system.}



\section*{Results}
\subsection*{Formulation and Dynamical Equations}\label{sec: intro}

Multilayer networks have different types of nodes interacting through different types of connections  \cite{Kivela_2014,boccaletti2014structure}. 
Previously defined subclasses of multilayer networks 
include multiplex~\cite{verbrugge1979multiplexity,sola2013eigenvector,gomez2013diffusion} and multidimensional~\cite{berlingerio2013multidimensional,coscia2013you} networks.
In a {multiplex network}, the same set of agents exists in all layers; for example, in a social system, each node is a person, each layer represents opinion on a topic, and links capture how social interactions influence a person's opinion on each topic. 
In a {multidimensional network}, different kinds of links connect the same set of nodes, all of the same type. 
For a deeper discussion of particular multilayer networks and how to reconcile each case with the general definition of a multilayer network, see~\cite{Kivela_2014,boccaletti2014structure}. {Sometimes the term `multilayer network' has been used in the literature to indicate generic networks formed of different types of nodes and/or connections. For example, in \cite{blaha2019cluster} a `multilayer network' has connections of different types but nodes all of the same type. Here we consider the most general situation for which both nodes and connections can be of different types, with the case of Ref.\ \cite{blaha2019cluster} remaining a special case of multilayer network.}

{
Reference \cite{tang2017master} introduced a mathematical model for the time evolution of a multiplex network.
Following previous studies of synchronization in different instances of multilayer networks \cite{sorrentino2012synchronization,irving2012synchronization,tang2017master,del2016synchronization,blaha2019cluster}, next we provide a general set of equations that describe the dynamics of a multilayer network. First we  define the sets of nodes and of connections (or interactions) of a multilayer network. 
The nodes are arranged in sets $\{X^\alpha, \alpha=1,\ldots,M\}$, where each individual set corresponds to a given layer of the multilayer network. 
	The uncoupled dynamics of all the $N^\alpha$ nodes in layer $X^\alpha$ is the same:  $\dot {\bf x}_i^\alpha = {\bf F}^\alpha({\bf x}_i^\alpha)$, $i=1,\ldots,N^\alpha$, ${\bf x}^\alpha \in \mathbb{R}^{n^\alpha}$.  The multilayer network has a total number of nodes equal to $N=\sum_\alpha N^\alpha$. The set of connections correspond to either
	 {intra-layer interactions} that connect nodes in the same layer or {inter-layer interactions} that connect nodes in different layers. 
 The intra-layer interactions inside layer $\alpha$ are described by an adjacency matrix $A^{\alpha\alpha}$,  to which is associated a nonlinear coupling function ${\bf H}^{\alpha\alpha}:\mathbb{R}^{n^\alpha}\mapsto\mathbb{R}^{n^\alpha}$ and  a coupling strength $\sigma^{\alpha\alpha}$. 
 The inter-layer interactions from layer $\beta$ to layer $\alpha$ are described by an  $N^\alpha\times N^\beta$ adjacency matrix $A^{\alpha\beta}$, to which is associated a nonlinear coupling function ${\bf H}^{\alpha\beta}:\mathbb{R}^{n^\alpha}\mapsto\mathbb{R}^{n^\beta}$ and  a coupling strength $\sigma^{\alpha\beta}$.}
	We assume throughout that all the couplings are undirected: $A^{\alpha\beta,\lambda}={A^{\beta\alpha,\lambda}}^T$, $\alpha,\beta=1,...,M$. 

{
We show an example of a multilayer network in Fig.\ 1{\textbf a}. This network has two layers, labeled $\alpha$ and $\beta$, with intra-connectivity described by the matrices $A^{\alpha \alpha}$ and $A^{\beta \beta}$ and inter-connectivity described by the matrix $A^{\alpha \beta}=(A^{\beta \alpha})^T$, all shown in {\textbf b}. 
Fig.\ 1{\textbf c} shows an independent layer symmetry. This symmetry  involves only nodes in layer $\alpha$; we can swap $\alpha$ nodes $1$ with $2$ and $\alpha$ nodes $3$ with $4$ without affecting layer $\beta$. Fig.\ 1{\textbf d} shows a dependent layer symmetry. This symmetry requires swapping nodes in both layer $\alpha$ and $\beta$; when we swap $\alpha$ nodes $2$ with $3$ and $\alpha$ nodes $1$ with $4$, we must also swap $\beta$ nodes $1$ and $3$.
Note that as dependent layer symmetries involve swapping nodes of different types, they are a characteristic feature of multilayer networks with different types of nodes. }

\begin{figure}[htb!]
    \includegraphics[width=\textwidth]{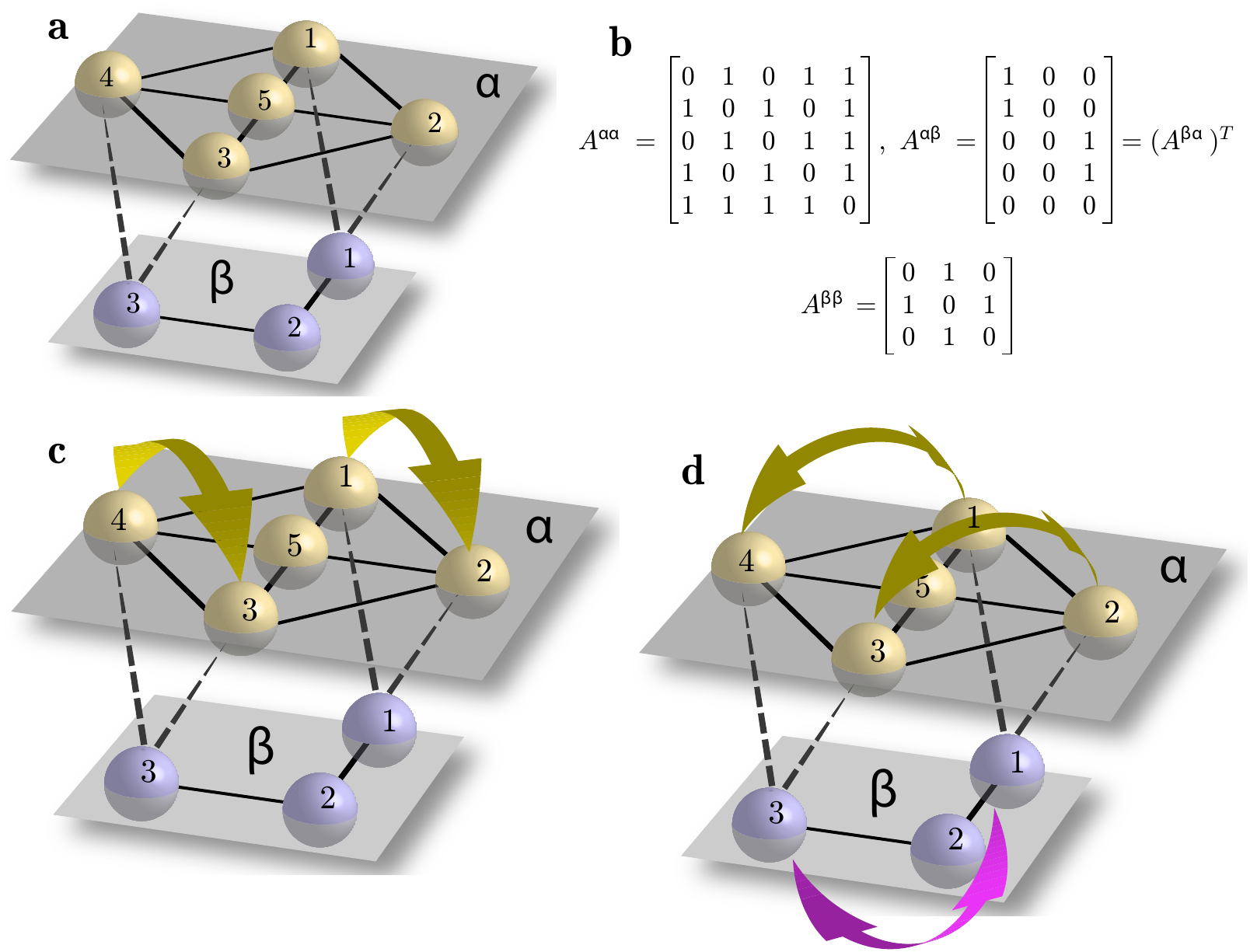}
    \caption{\textbf{Illustration of Independent Layer Symmetries and Dependent Layer Symmetries.} {\textbf{a}} A simple two layer network with intra-connectivity described by the matrices $A^{\alpha \alpha}$ and $A^{\beta \beta}$ and inter-connectivity described by the matrix $A^{\alpha \beta}=(A^{\beta \alpha})^T$ (in {\textbf{b}}). {\textbf{c}} Example of an Independent Layer Symmetry for the multilayer network in {\textbf{a}}. {\textbf{d}} Example of a Dependent Layer Symmetry for the multilayer network in {\textbf{a}}.}
    \label{fig:twolayer1}
\end{figure}


The dynamics of node $i$ in layer $\alpha$ of the multilayer network is governed by the following set of differential equations,

\begin{equation} \label{eq:gen_dyn}
    \dot {\bf x}_i^\alpha = {\bf F}^\alpha({\bf x}_i^\alpha) +
    \underbrace{ \sigma^{\alpha\alpha}
    \sum_{j=1}^{N^\alpha} A^{\alpha\alpha}_{ij} {\bf H}^{\alpha\alpha} ({\bf x}_j^\alpha)}_{\text{intra-layer couplings}}
    +
    \underbrace{\sum_{\beta\neq\alpha}
 \sigma^{\alpha\beta}
    \sum_{j=1}^{N^\beta} A^{\alpha\beta}_{ij} {\bf H}^{\alpha\beta} ({\bf x}_j^\beta)}_{\text{inter-layer couplings}},
\end{equation}
$i=1,...,N^{\alpha}$, $\alpha=1,...,M$.
The underlying assumption here is that functions that are labeled differently are different functions. 
By this we mean ${\bf F}^\alpha(x) \neq {\bf F}^\beta(x)$, for $\beta\neq \alpha$.

Next we introduce a vectorial notation. 
We combine the dynamical variables from each layer into a vector of vectors ${\bf x}^\alpha = [{\bf x}^\alpha_1,{\bf x}^\alpha_2,...,{\bf x}^\alpha_{N^\alpha}]$ (here and after, the comma stands for vertical stacking of vector and the square brackets stand for vector concatenation). 
Likewise, we can define, 
\begin{equation} 
{\bf F}^\alpha ({\bf x}^\alpha)=[{\bf F}^\alpha ({\bf x}^\alpha_1),{\bf F}^\alpha ({\bf x}^\alpha_2),...,{\bf F}^\alpha ({\bf x}^\alpha_{N^\alpha})], 
\end{equation}
\begin{equation} 
{\bf H}^{\alpha\alpha} ({\bf x}^\alpha)=[{\bf H}^{\alpha\alpha} ({\bf x}^\alpha_1),{\bf H}^{\alpha\alpha} ({\bf x}^\alpha_2),...,{\bf H}^{\alpha\alpha} ({\bf x}^\alpha_{N^\alpha})], 
\end{equation}
\begin{equation}
{\bf H}^{\alpha\beta} ({\bf x}^\beta)=[{\bf H}^{\alpha\beta} ({\bf x}^\beta_1),{\bf H}^{\alpha\beta} ({\bf x}^\beta_2),...,{\bf H}^{\alpha\beta} ({\bf x}^\beta_{N^\beta})]. 
\end{equation}
This notation suppresses the summations over the nodes in each layer. We obtain the set (one for each layer) of vector equations 
\begin{equation} \label{eq:CompactEqns_alpha}
    \dot {\bf x}^\alpha = {\bf F}^\alpha({\bf x}^\alpha) +
     \sigma^{\alpha\alpha}
     A^{\alpha\alpha} {\bf H}^{\alpha\alpha} ({\bf x}^\alpha)
    +
    \sum_{\beta\neq\alpha} \sigma^{\alpha\beta}
     A^{\alpha\beta} {\bf H}^{\alpha\beta} ({\bf x}^\beta), \quad \alpha=1,...,M.
\end{equation}

Two nodes $i$ and $j$ are synchronized if ${\bf x}_i(t)={\bf x}_j(t)\, \forall t$. 
Synchronous motions define invariant manifolds of Eqs. \eqref{eq:CompactEqns_alpha}, \ie ${\bf x}_i={\bf x}_j\Rightarrow \dot {\bf x}_i=\dot {\bf x}_j$. 
In order to  study cluster synchronization, we will look for symmetries in the multilayer network; these are the node permutations that leave system \eqref{eq:CompactEqns_alpha} unchanged.  
The {group of symmetries of the multilayer network} is the set of permutations of the nodes that do not change Eqs. \eqref{eq:CompactEqns_alpha} for $\alpha=1,...,M$. 
Below we explain how to find this group of symmetries in arbitrary multilayer networks, and we investigate the relations between intralayer and interlayer connections in order to preserve certain symmetries properties. 

\subsection*{Symmetries of Multilayer Networks}

Symmetries of complex networks have received recent attention from the scientific community. 
These symmetries influence network structural \cite{BDMacArthur_2008jk}, spectral \cite{macarthur2009spectral}, and dynamical properties \cite{sorrentino2019symmetries}, including cluster synchronization \cite{pecora2014cluster,sorrentino2016complete}. However, symmetries of multilayer networks have not been studied. 


Here we define the group of symmetries of general multilayer networks {with both nodes of different types and connections of different types}. 
Although we will eventually use computational methods to find the symmetry group of a particular network, it is best to understand what types of structures are imposed on the symmetry group of a multilayered network in general. The study of the symmetry group provides insight into the results of the numerical calculations 
and into whether what we find is general or just a property of the particular network we are analyzing. 
Next we develop some of the mathematical properties of the final permutation group structure of the multilayer systems.  As we will see, the interplay between the symmetries in each layer with the symmetries in other layers is subtle and leads to interesting structures in the final group of the whole network.

We first show that the multilayer structure imposes a block diagonal form on the permutations of the whole group. 
Since each layer of the network contains a different kind of node, symmetries are not allowed to move nodes from different layers. 
As a result, the symmetry group ${\mathcal G}$ of the multilayer network is represented by block diagonal permutation matrices, i. e., each $g \in {\mathcal G}$ is of the form,
\begin{equation} \label{eq:gblockdiag}
g=
\begin{pmatrix}
g_{\alpha}&0&0&\ldots  \\
0&g_{\beta}&0&\ldots  \\
0&0&g_{\gamma}&\ldots  \\
\vdots&\vdots&\vdots&\ddots 
\end{pmatrix},
\end{equation}
where $g_{\alpha}$ is a permutation matrix for layer $\alpha$, $g_{\beta}$ is a permutation matrix for layer $\beta$, and so on. 
In this section we show that the symmetries of the group ${\mathcal G}$ of the multilayer network are formed of the symmetries of the single-layers (i.e., $g_{\alpha} \in {\mathcal G}^{\alpha}$, $g_\beta \in {\mathcal G}^{\beta}$, ...) that are {compatible} with each other,  where compatibility is determined by the inter-layer couplings. This means that if we perform a permutation on layer $\alpha$, we must perform a permutation on each other layer (which in some cases may be the identity permutation)
by leaving the structure of the overall network unaltered.
As a result, compatibility links permutations from the groups of symmetries of each layer, so that we preserve the symmetry of the whole multilayer network. 
The major question we want to answer here is: how does compatibility relate permutations from different layers?

Let us consider a multilayer system with two layers $\alpha$ and $\beta$.  
Eq.\ \eqref{eq:CompactEqns_alpha} becomes,
\begin{equation}
    \begin{array}{rcl}
     \dot{\bf x}^\alpha &=& {\bf F}^\alpha({\bf x}^\alpha) + \sigma^{\alpha\alpha} A^{\alpha \alpha} {\bf H}^{\alpha\alpha} ({\bf x}^\alpha) + \sigma^{\alpha \beta} A^{\alpha \beta} {\bf H}^{\alpha \beta} ({\bf x}^\beta)
\\
     \dot{\bf x}^\beta  &=& {\bf F}^\beta({\bf x}^\beta) + 
     \sigma^{\beta\beta} A^{\beta \beta} {\bf H}^{\beta\beta} ({\bf x}^\beta) + \sigma^{\beta \alpha} A^{\beta \alpha} {\bf H}^{\beta \alpha} ({\bf x}^\alpha).
\end{array}
\end{equation}
In Methods, we show that we can determine the symmetry group of a multilayer network with $M>2$ layers, multiple edge types inside each layer and in between layers. 
Consider two permutations $g \in {\mathcal G}^\alpha$ and $h \in {\mathcal G}^\beta$. We know from Eq. \eqref{eq:gblockdiag} that a  symmetry of this multilayer network is in the form
\begin{equation} \label{eq:2Dblockdiag}
\begin{pmatrix}
g&0 \\
0&h
\end{pmatrix},
\end{equation}
so, which $g$'s and $h$'s are compatible?
For the answer, we consider the inter-layer couplings terms,
\begin{equation}
    \sigma^{\alpha \beta} A^{\alpha \beta} {\bf H}^{\alpha \beta} ({\bx}^\beta)
    \qquad \mbox{and} \qquad  
    \sigma^{\beta \alpha} A^{\beta \alpha} {\bf H}^{\beta \alpha} ({\bx}^\alpha).
\end{equation}
We say that symmetry-related nodes must be {flow invariant}.
That is, the symmetries must guarantee that synchronized nodes have equal dynamical variables when we include inter-layer coupling. 

Application of the two permutations to the equations of the multilayer network results in
\begin{equation}
    \begin{array}{rcl}
g{(\dot{\bf x}^\alpha)} &=& {\bf F}^\alpha(g{\bf x}^\alpha) + \sigma^{\alpha\alpha} A^{\alpha \alpha} {\bf H}^{\alpha\alpha} (g{\bf x}^\alpha)+ \sigma^{\alpha \beta} g A^{\alpha \beta} {\bf H}^{\alpha \beta} ({\bf x}^\beta)   \\
h(\dot{{\bf x}}^\beta) &=& {\bf F}^\beta(h{\bf x}^\beta) + 
     \sigma^{\beta\beta} A^{\beta \beta} {\bf H}^{\beta\beta} (h{\bf x}^\beta)+ \sigma^{\beta \alpha} h A^{\beta \alpha} {\bf H}^{\beta \alpha} ({\bf x}^\alpha),
\end{array}
\end{equation}
where the $g$ and $h$ permutations can be taken into the arguments of ${\bf F}^\alpha, {\bf H}^{\alpha\alpha}$ and ${\bf F}^\beta, {\bf H}^{\beta\beta}$, respectively since the functions operate sequentially on their vector arguments and we have used the properties that $g$ commutes with $A^{\alpha \alpha}$ and $h$ commutes with $A^{\beta \beta}$.

To achieve flow invariance, we need $g$ and $h$ to act on ${\mathbf x}^{\alpha}$ and ${\mathbf x}^{\beta}$, respectively, in the arguments of the interlayer coupling terms ${\bf H}^{\alpha \beta}$ and ${\bf H}^{\beta \alpha}$. 
Note that this does not require commutability ($g A^{\alpha\beta} = A^{\alpha\beta}g$) since we want to `exchange' $g$ for $h$ and vice-versa 
(as the interlayer coupling matrices are generally not square, we do not expect commmutability). 
More generally, conjugacy relations are the requirements for symmetry compatibility, \ie 
\begin{equation} \label{eq:conj}
    g A^{\alpha \beta}=A^{\alpha \beta} h \qquad \mbox{and} \qquad h A^{ \beta \alpha}=A^{\beta \alpha} g;
\end{equation}

\noindent which can be thought of as compatibility relations between permutations of the $\alpha$ and $\beta$ layers. 
This means that the $g$'s and the $h$'s must be paired properly to satisfy Eq.~\eqref{eq:conj}. 
In general, not all the $h$'s are compatible with all the $g$'s, as the conjugacy relations restrict compatible permutations to subgroups of ${\mathcal G}^\alpha$ and ${\mathcal G}^\beta$. The final group  ${\mathcal G}$ of the multilayer network is determined by the structure of these particular subgroups.

 The permutations that fulfill  the conjugacy relations \eqref{eq:conj} are defined by the following sets,

\begin{equation}\label{eq:Ha}
    \Ha= \{g \in \Ga | gA^{\alpha \beta}=A^{\alpha \beta}h\ \mbox{ and }  
    h A^{\beta \alpha}=A^{\beta \alpha}g \mbox{ for  some } \  h \in \Gb \}
\end{equation}

and

\begin{equation}\label{eq:Hb}
    \Hb= \{h \in \Gb | h A^{\beta \alpha}=A^{\beta \alpha} g\ \mbox{ and }  
    g A^{\alpha\beta}=A^{\alpha\beta}h \mbox{ for  some } \  g \in \Ga \}.
\end{equation}

An element $g \in \Ha$ is represented by an $N^\alpha \times N^\alpha$ matrix and an element $h \in \Hb$ is represented by an $N^\beta \times N^\beta$ matrix. It is important to note that in general the matrix $A^{\alpha\beta}\in\R^{N^\alpha \times N^\beta}$  will have nontrivial left and right null spaces. As a result, we may find more than one $g$ that satisfies $g A^{\alpha \beta}=A^{\alpha \beta} h$ for a given $h$ and vice versa.

In the Methods (Formal Proofs) we first prove that $\Ha$ is a subgroup of $\Ga$ and $\Hb$ is a subgroup of $\Gb$. Then we show that
for the case of undirected networks we only need either one of the following relationships to prove that $\Ha$ and $\Hb$ are subgroups,
\begin{equation} \label{eq:or}
    g A^{\alpha \beta}=A^{\alpha \beta} h \qquad \mbox{or} \qquad h A^{ \beta \alpha}=A^{\beta \alpha} g.
\end{equation}

We can now find the group of symmetries of the multilayer network $\mathcal G$ from the subgroups $\Ha$ and $\Hb$. 
First, recall that the permutations of $\mathcal G$ have block structure of Eq.\ \eqref{eq:2Dblockdiag}. 
While we use all permutations $g \in \Ha$ and $h\in\Hb$ to construct $\mathcal{G}$, we can only pair the permutations that satisfy the conjugacy relation in Eq. \eqref{eq:conj} (or the simpler version in Eq.\ \eqref{eq:or}). 

We define an equivalence relation $\sim$ between the elements of $\Ha$: $g \sim g'$ if $g A^{\alpha\beta}=g' A^{\alpha\beta}$. 
Analogously, $h \sim h'$ if $h A^{\beta\alpha}=h' A^{\beta\alpha}$. 
One can verify that $\sim$ is an equivalence relation as it is reflexive, symmetric, and transitive. 
We also see that if $g \sim g'$ and $g$ and $h$ are conjugate, then so are $g'$ and $h$, indicating that $\sim$ defines a partition of each subgroup $\Ha$ and $\Hb$ into disjoint subsets ({called equivalence classes}) $\mathcal K^{\alpha}_i$ and $\mathcal K^{\beta}_i$, $i=1,...,K$, respectively. 
Each subset $\mathcal K^{\alpha}_i$ contains all the permutations $g$ such that  $g A^{\alpha\beta}$ is equal to a given matrix $M_i$; 
correspondingly, each subset $\mathcal K^{\beta}_i$ contains all the permutations $h$ such that $h A^{\beta\alpha}$ is equal to {the matrix $M_i^T$.}
We can then construct the group of  symmetries of the multilayer network $\mathcal G$ as follows,
\begin{equation} \label{eq:2DGroup}
\mathcal G=\left\{
\begin{pmatrix}
g&0 \\
0&h
\end{pmatrix}
| g \in \mathcal K^{\alpha}_i \mbox{ and } h \in \mathcal K^{\beta}_i, \mbox{ for } i=1,...,K
\right\}
\end{equation}
Although the sets $\mathcal K^{\alpha}_i$ and $\mathcal K^{\beta}_i$ have a one-to-one correspondence, the elements of each do not and often differ in number, as shown {in Methods for the example multilayer network in Fig.\ 1.} 

Next, we present  properties of the equivalence classes $\mathcal K^\alpha_i$, which then leads to the definition of Independent Layer Symmetries (ILS).
Let $\mathcal K^\alpha_1$ be the equivalence class containing the identity. We can prove that $\mathcal K^\alpha_1$ is a normal subgroup of $\mathcal H^\alpha$.
Moreover, we can prove that all the $\mathcal K^\alpha_i$, $i \neq 1$, are left and right cosets of $\mathcal K^\alpha_1$. Formal proofs for each statement are included in Methods (Formal Proofs.)

The structure of the equivalence classes $\mathcal K^\alpha_i$ gives insight into the symmetries; this facilitates the calculations of the classes themselves and of the $T$ matrices for each layer (see Methods for the structure of the $T$ matrix). 
In particular the subgroups $\mathcal K^\alpha_1$, $\mathcal K^\beta_1$, ... identify a special set of symmetries we will refer to as 
 Independent Layer Symmetries. 
The following general relationship between the symmetry operations from $\mathcal K^\alpha_1$, $\mathcal K^\beta_1$,... and the layer structures clarifies why we refer to the elements of $\mathcal K^\alpha_1$, $\mathcal K^\beta_1$,... as Independent Layer Symmetries. Given any $g \in \mathcal K^\alpha_1$, then for each $h \in \mathcal K^\beta_1$,

\begin{table}[t!]
\centering
\caption{Symmetries of Real Multilayer Networks. For each network dataset, we include information on the  number of layers $M$, the number of nodes in each layer $ N^{\alpha}, \alpha=1,...M$ (the number of nodes $\mathcal{N}$ in all layers for multiplex networks), the number of edges $E$, the order of the automorphism group $|\mathcal{G}|$, and the cardinality of the largest orbital cluster $\max(|\mathcal{C}_k |)$. }
\begin{threeparttable}
	{\setlength\doublerulesep{ 2pt}
	\begin{tabular}{lclccc}
		\toprule\midrule
		Name                                                                                   & $M$ &      Number of nodes            &  $E$  &      $|\mathcal{G}|$       & $\max(| \mathcal{C}_k |)$ \\ \midrule\midrule
		\scriptsize{London Transport}\textsuperscript{**} \cite{de2014navigability}                      &  3  &      $\mathcal{N}=369$       &  441  &             4              &             3             \\
		\scriptsize{EU-AIR Transport} \cite{cardillo2013emergence}                        & 37  &      $\mathcal{N}=450$       & 3588  &  $1475532 \times 10^{31}$  &             4             \\
		\scriptsize{ARXIV NETSCIENCE}\textsuperscript{**} \cite{de2015identifying}                       & 13  &     $\mathcal{N}=14489$      & 59026 & $8349492 \times 10^{2173}$ &            16             \\
		\scriptsize{PIERRE AUGER Coauthorship}\textsuperscript{**} \cite{de2015identifying}              & 16  &      $\mathcal{N}=514$       & 7153  &   $19726 \times 10^{70}$   &            11             \\
		\scriptsize{CKM PHYSICIANS Social}$ ^\ast$ \cite{coleman1957diffusion}            &  3  &      $\mathcal{N}=246$       & 1551  &            120             &             2             \\
		\scriptsize{BOS Genetic}$ ^\ast$ \cite{de2015structural,stark2006biogrid}         &  4  &      $\mathcal{N}=321$       &  325  &     $2 \times 10^{90}$     &            25             \\
		\scriptsize{CANDIDA Genetic}$ ^\ast$ \cite{de2015structural,stark2006biogrid}     &  7  &      $\mathcal{N}=367$       &  397  &    $2 \times 10^{470}$     &            57             \\
		\scriptsize{DANIORERIO Genetic}$ ^\ast$ \cite{de2015structural,stark2006biogrid}  &  5  &      $\mathcal{N}=155$       &  188  &    $798 \times 10^{20}$    &            12             \\
		\scriptsize{{US power-grid}} \cite{nr}                             &  2  & $N^\alpha=4492; N^\beta=449$ & 6994  &    $5 \times 10^{152}$     &             9             \\ \midrule\bottomrule
	\end{tabular}
}
\label{tab:realnet}
\begin{tablenotes}
	\item[*] \footnotesize{Network treated as undirected.} \item[**] \footnotesize{Weighted Network but treated as unweighted.}
\end{tablenotes}
\end{threeparttable}
\end{table}

\begin{equation} \label{eq:ghindeplayersym}
 g A^{\alpha\beta}=A^{\alpha\beta} h = A^{\alpha\beta},
\end{equation}
since the equivalence class subgroups are both associated with the identity operation. This means that the symmetry operations from the equivalence class subgroups do not affect the interlayer coupling and, hence, those operations (permutations) in one layer do not affect the types of allowed dynamics in the other layer(s). In particular, the loss of a symmetry associated with the ILS subgroup in the dynamics, i.e. a symmetry-breaking bifurcation, will not alter the possible clusters in other layers, provided no symmetries in the other cosets ($\mathcal K^\alpha_i, i \neq 1$) are also broken. We will show simple examples of this in the Methods Section and when presenting the experiment.
Note that the stability of the clusters is a different issue, which will receive separate consideration.

To conclude, the ILS clusters are determined by the interplay between the intralayer couplings in each layer and the interlayer couplings between layers. In what follows we will refer to symmetries that are not ILS, i.e. they involve simultaneously swapping nodes in different layers as Dependent Layer Symmetries (DLS). 

From knowledge of the group of symmetries of the multilayer network, the nodes in each layer $\alpha$ can be partitioned into $L^\alpha$ {orbital} clusters, $\mathcal{C}^\alpha_1, \mathcal{C}^\alpha_2,..., \mathcal{C}^\alpha_{L^\alpha}$. Inside each layer, symmetries map into each other only nodes in the same orbital cluster.


{
In Table 1 we apply the techniques described in this section to compute the symmetries of several multilayer networks from datasets in the literature. For each network dataset, we include information on  the  number of layers $M$, the number of nodes in each layer $N^{\alpha}, \alpha=1,...M$ (the number of nodes $\mathcal{N}$ in all layers  for multiplex networks), the number of edges $E$, the order of the automorphism group $|\mathcal{G}|$, and the cardinality of the largest orbital cluster $\max(|\mathcal{C}_k |)$.
The main underlying assumption is that nodes are homogeneous inside each layer but nodes in different layers are not, so that nodes inside each layer can be symmetric, but nodes from different layers cannot. While we understand this is an oversimplification, available datasets do not typically include information on the attributes of the individual nodes,  and as a result, our symmetry analysis is solely based on the multilayer network structure [37]  and not on the specific node attributes. An extensive description of each one of the datasets is included in the Supplementary Note 1 and in the Supplementary Table 1. From Table 1 we see that several  real multilayer networks  possess very large numbers of symmetries.}

{In Supplementary Note 2 we study the emergence of symmetries in artificially generated multilayer networks, where each layer is a Scale Free (SF)  network and nodes from different layers are randomly matched to one another to obtain a multiplex network. We see that these networks typically do not display symmetries, except for the case that the power-law degree distribution exponents of the networks in each layer are low across the layers. It is interesting that {some of the} real multilayer networks {we have analyzed} display many more symmetries than {these} model multilayer networks. This observation motivated us to design a generating algorithm to construct multilayer networks with prescribed number of symmetries, which is presented in Supplementary Note 3.}

\subsection*{Stability Analysis}

In what follows we study stability of the cluster-synchronous solution for a general multilayer network described by Eqs.\ \eqref{eq:gen_dyn}.
 Given an orbital partition, we define the ${L^\alpha}\times {L^\alpha}$ intralayer quotient matrix $Q^{\alpha \alpha}$ such that for each pair of $\alpha$-clusters ($\mathcal{C}^\alpha_u,\mathcal{C}^\alpha_v$),
	\begin{equation}
	Q^{\alpha \alpha}_{uv}=\sum_{j\in\mathcal{C}^\alpha_v}A^{\alpha \alpha}_{ij}, \qquad i\in\mathcal{C}^\alpha_u,\quad u, v=1,2,\dots {L^\alpha}.
	\label{eq: quotient matrix}
	\end{equation}
	Analogously, we define the ${L^\alpha}\times {L^\beta}$ interlayer quotient matrix $Q^{\alpha \beta}$ such that for each pair of clusters, the first cluster ${\mathcal C}^\alpha_u$ from layer $\alpha$ and the second cluster ${\mathcal C}^\beta_v$ from layer $\beta$,
	\begin{equation}
	Q^{\alpha \beta}_{uv}=\sum_{j\in\mathcal{C}^\beta_v}A^{\alpha \beta}_{ij}, \qquad i\in\mathcal{C}^\alpha_u,\quad u=1,2,\dots {L^\alpha}, v=1,2,\dots {L^\beta}.
	\label{eq: quotient matrix}
	\end{equation}

We can thus write the equations for the time evolution of the  {quotient multilayer} network,

\begin{equation} \label{quot}
    \dot {\bf q}_u^\alpha = {\bf F}^\alpha({\bf q}_u^\alpha) +
    { \sigma^{\alpha\alpha}
    \sum_{v=1}^{L^\alpha} Q^{\alpha\alpha}_{uv} {\bf H}^{\alpha\alpha} ({\bf q}_v^\alpha)}
    +
    {\sum_{\beta\neq\alpha}
     \sigma^{\alpha\beta}
    \sum_{v=1}^{L^\beta} Q^{\alpha\beta}_{uv} {\bf H}^{\alpha\beta} ({\bf q}_v^\beta)},
\end{equation}
$\alpha=1,...,M$, $u=1,...,L^\alpha$. Note that  Eqs. \eqref{quot} provides  a mapping for each layer $\alpha$ from the node coordinates  $\{{\bf x}_i^\alpha \}$, $i=1,...,N^\alpha$ to the quotient coordinates $\{{\bf q}_u^\alpha \}$, $u=1,...,L^\alpha$, where ${\bf x}_i^\alpha(t) \equiv {\bf q}_u^\alpha(t)$ if $i \in \mathcal{C}_u^\alpha$.

We now linearize \eqref{eq:gen_dyn} about \eqref{quot},


\begin{equation} \label{interm}
     \delta \dot{{\bf x}}_i^\alpha = DF^\alpha({\bf q}_u^\alpha) \delta {{\bf x}}_i^\alpha +
    {\sum_{\beta}
     \sigma^{\alpha\beta}
    \sum_{j=1}^{N^\beta} A^{\alpha\beta}_{ij} DH^{\alpha\beta} ({\bf q}_v^\beta) \delta {\bf x}_j^\beta }, \qquad i=1,...,N^{\alpha},
\end{equation}
where again ${\bf q}_v^\beta$ is the time evolution of the quotient network node $v$ that node $j \in {\mathcal C}_v^\beta$ maps to.
Also note that in the above equation the summation in $\beta$ runs over both intra-layer connections and inter-layer connections.

For each layer $\alpha$, the above set of equations, can be written in vectorial form, by stacking together all the individual perturbations applied to the vectors
inside each layer, e.g., $\delta {\bf x}^{\alpha}=[\delta {\bf x}_1^{\alpha}, \delta {\bf x}_2^{\alpha},..., \delta {\bf x}_{N^{\alpha}}^{\alpha} ]$,

\begin{equation} \label{eq:stacked}
\delta \dot{\bf x}^\alpha =  \left(\sum_{u=1}^{L^\alpha} E_{u}^{\alpha} \otimes DF^\alpha({\bf q}_u^\alpha)
    \right)\delta {\bf x}^\alpha
     + \sum_\beta\left(\sigma^{\alpha\beta} (A^{\alpha\beta}\otimes I_{n^\alpha}) \sum_{u=1}^{L^\beta}    \left(E^\beta_{u} \otimes DH^{\alpha \beta}({\bf q}_u^{\beta})\right)\right) \delta {\bf x}^\beta, 
\end{equation}  
$\alpha=1,...,M,$ where each indicator matrix $E_{u}^\alpha$ has dimension $N^\alpha$ and is such that its diagonal entries are equal to one if node $i$ of layer $\alpha$ is in cluster ${\mathcal{C}_u^\alpha}$ and are equal to zero otherwise.

We then stack all the layers one above the other to form the vector $\delta {\bf x}=[\delta {\bf x}^{\alpha}, \delta {\bf x}^{\beta}, \dots ]$,
and we rewrite \eqref{eq:stacked} as 
\begin{equation} \label{eq:stacked2}
\begin{split}
     \delta \dot{\bf x} = & \left(
      \begin{bmatrix}
\sum_{u} E_{u}^{\alpha} \otimes DF^\alpha({\bf q}_u^\alpha)  & 0 &  \cdots\\
0 & \sum_{u} E_{u}^{\beta} \otimes DF^\beta({\bf q}_u^\beta) &  \cdots\\
\vdots & \vdots &  \ddots
\end{bmatrix} \right.  \\
\
 + & \left. \begin{bmatrix}
  \sum_u A^{\alpha\alpha} E^\alpha_{u}  \otimes D\hat{H}^{\alpha \alpha}({\bf q}_u^{\alpha})  & \sum_u A^{\alpha\beta}   E^\beta_{u}  \otimes D\hat{H}^{\alpha \beta}({\bf q}_u^{\beta}) &  \cdots\\
\sum_u A^{\beta\alpha} E^\alpha_{u}  \otimes D\hat{H}^{\beta \alpha}({\bf q}_u^{\beta}) &  \sum_u A^{\beta\beta} E^\beta_{u}  \otimes D\hat{H}^{\beta \beta}({\bf q}_u^{\beta}) &  \cdots\\
\vdots & \vdots &  \ddots
\end{bmatrix} \right)
 \delta {\bf x}, 
\end{split}
\end{equation}
    where $D\hat{H}^{\alpha \beta}= \sigma^{\alpha \beta} DH^{\alpha \beta} $.

 We are looking for a transformation that applied to Eq.\ \eqref{eq:stacked2}, leaves the first terms on the right hand side of \eqref{eq:stacked2} {unchanged} and decouples the second terms in independent blocks, independent of the $DF$ and the $DH$ terms as they vary in time. 




From knowledge of the group of symmetries of the multilayer network $\mathcal G$, we can compute
the irreducible representations (IRRs) of $\mathcal G$. For each layer $\alpha$ we can define an orthonormal transformation $T^\alpha$  to the  IRR coordinate system (see \cite{pecora2014cluster}). We can then construct the following block diagonal orthonormal matrix,
\begin{equation}
T=\bigoplus_\alpha T^\alpha,
\end{equation}
that maps the entire multilayer network to the IRR coordinate system.

A formal proof of the above particular structure of the matrix $T$ can be found in Methods (Properties of the Equivalence Classes and structure of the matrix $T$).
Intuitively, the matrix $T$ has a block diagonal structure, where each block corresponds to a layer, because only nodes from the same layer can be swapped by a symmetry.

Consider now the $N$-dimensional {supra-adjacency matrix},
 \begin{equation}
    A= \begin{bmatrix}
A^{\alpha \alpha}  & A^{\alpha \beta} &  \cdots\\
A^{\beta \alpha} & A^{\beta \beta} &  \cdots\\
\vdots & \vdots &  \ddots
\end{bmatrix}.
\end{equation}
The transformed $N \times N$ block diagonal matrix $B= T A T^{-1}$ is a direct sum $\oplus_{r=1}^R I_{d_r} \otimes \hat{B}_r$, where $\hat{B}_r$ is a (generally complex) $p_r \times p_r$ matrix
with $p_r$ the multiplicity of the $r$th IRR in the permutation
representation, $R$ the number of IRRs present and $d_r$
the dimension of the $r$th IRR, so that
$\sum_r d_r p_r= N$.
The matrix $T$ contains information on which perturbations affecting different clusters get mapped  to different IRRs \cite{siddique2018symmetry}.
 There is one representation (labeled $r=1$) which we call {trivial} and has dimension $d_1=\sum_{\alpha} L^\alpha$. All perturbations parallel to the synchronization manifold get mapped to this representation. Hence, the trivial representation is associated with all the clusters $\mathcal{C}^{1}_1,..., \mathcal{C}^{1}_{L^1}, \mathcal{C}^{2}_1,..., \mathcal{C}^{2}_{L^2},...$ However, it is possible that other IRR representations are only associated with some of the clusters (not all of them). 

We can now define the $\sum_{\alpha} N^\alpha n^\alpha$-dimensional orthonormal matrix 
${\tilde T}=\bigoplus_\alpha T^\alpha \otimes I_{n_\alpha}$.
Next, we will use the matrix $\tilde T$ to block-diagonalize Eq.\ \eqref{eq:stacked}.
Applying the transformation $ \text{\boldmath$\eta$} = {\tilde T}  \delta {\bf x}$ to \eqref{eq:stacked2} we obtain,

\begin{equation} \label{eq:stacked3}
\begin{split}
      \dot{\text{\boldmath$\eta$}} = & \left(
      \begin{bmatrix}
\sum_{u} J_{u}^{\alpha} \otimes DF^\alpha({\bf q}_u^\alpha)  & 0 &  \cdots\\
0 & \sum_{u} J_{u}^{\beta} \otimes DF^\beta({\bf q}_u^\beta) &  \cdots\\
\vdots & \vdots &  \ddots
\end{bmatrix} \right. \\
\
 + & \left. \begin{bmatrix}
  \sum_u B^{\alpha\alpha} J^\alpha_{u}  \otimes D\hat{H}^{\alpha \alpha}({\bf q}_u^{\alpha})  & \sum_u B^{\alpha\beta}   J^\beta_{u}  \otimes D\hat{H}^{\alpha \beta}({\bf q}_u^{\beta}) &  \cdots\\
\sum_u B^{\beta\alpha} J^\alpha_{u}  \otimes D\hat{H}^{\beta \alpha}({\bf q}_u^{\beta}) &  \sum_u B^{\beta\beta} J^\beta_{u}  \otimes D\hat{H}^{\beta \beta}({\bf q}_u^{\beta}) &  \cdots\\
\vdots & \vdots &  \ddots
\end{bmatrix} \right)
  \text{\boldmath$\eta$}, 
\end{split}
\end{equation}
where each transformed indicator matrix $J_u^\alpha= T^\alpha E_u^\alpha {T^{\alpha}}^T$ and each block $B^{\alpha \beta}={T^{\alpha}} A^{\alpha \beta} {T^{\beta}}^T.$

The advantage of \eqref{eq:stacked3} over \eqref{eq:stacked2} lies in the block-diagonal structure of the matrix $B=\oplus_{r=1}^R I_{d_r} \otimes \hat{B}_r$. 
The block $\hat{B}_1$ is associated with motion along the synchronization manifold. The blocks $\hat{B}_2,...,\hat{B}_R$ describe the dynamics transverse to the synchronization manifold.
As a result, we have decoupled the dynamics along the synchronous manifold from that transverse to it \cite{pecora2014cluster}. 
Moreover, each transverse block $r=2,..,R$ is associated with either an individual cluster or a subset of intertwined clusters \cite{pecora2014cluster}.
Thus the problem of studying the behavior of a perturbation away from the synchronous solution is typically reduced into many smaller problems, which can be analyzed independently one from the other.



{Next we discuss how Dependent and Independent Layer Symmetries affect the stability analysis.}
We recall that the matrix $B= T A T^{-1}$ can be written as a direct sum of blocks $\oplus_{r=1}^R I_{d_r} \otimes \hat{B}_r$. 
As each row of the matrix $T$ is associated to a specific cluster \cite{klickstein2019symmetry}, each one of the blocks $\hat{B}_r$ corresponds to a set of clusters which are identified by the rows of the matrix $T$.
 The trivial representation ($r=1$) is associated with all the clusters $\mathcal{C}^{1}_1,..., \mathcal{C}^{1}_{L^1}, \mathcal{C}^{2}_1,..., \mathcal{C}^{2}_{L^2},...$ The corresponding rows of the matrix $T$ are
 the eigenvectors associated to the eigenvalue 1 of the trivial representation; these have nonzero components all of the same sign. Now we look at the remaining `transverse' blocks of the matrix $B$ ($r>1$). The corresponding rows of the matrix $T$ are such that the sums of their entries is equal zero and are called {symmetry breakings}: each row, in fact, describes how a cluster, generated by a symmetry, may break into smaller ones. The transverse blocks can be divided into ILS blocks if the corresponding  symmetry breakings are all from the same layer and DLS blocks if the corresponding symmetry breakings are from different layers. We can then write $T=[T_{\text{ SYNC}}^T,T_{\text{ ILS}}^T,T_{\text{DLS}}^T]^T$ and the transformed vector $\text{\boldmath$\eta$}=[ \text{\boldmath$\eta$}_{\text{SYNC}}^T, \text{\boldmath$\eta$}_{\text{ILS}}^T, \text{\boldmath$\eta$}_{\text{DLS}}^T]^T$.

{The ILS blocks are symmetry breakings generated by the ILS subgroup.} From our definition of the ILS subgroup, each $\mathcal{K}_1^\alpha$ is a normal subgroup of $\mathcal{H}^\alpha$. 
Clifford theorem \cite{clifford1937representations} states that each IRR of $\mathcal{H}^\alpha$, when restricted on $\mathcal{K}_1^\alpha$, is either itself an IRR of $\mathcal{K}_1^\alpha$ or breaks up into a direct sum of IRRs of $\mathcal{K}_1^\alpha$ of the same dimension. The rows of the change of coordinates $T_{\text{ILS}}^\alpha$ (for layer $\alpha$) to the IRR of $\mathcal{K}_1^\alpha$, which are associated with the symmetry breaking perturbations of the ILS, are generated by the eigenvectors associated to the eigenvalue 1 of the projectors on the IRRs of $\mathcal{K}_1^\alpha$. 
These rows generate invariant subspaces of minimal dimension, and must therefore be rows of $T^\alpha$. 

We now consider for simplicity the case of a network with two layers, labeled $\alpha$ and $\beta$. In general, the transverse IRRs associated with the ILS subgroup in layer $\alpha$ will have  structure,
\begin{equation} \label{eta_ILS}
\dot{\text{\boldmath$\eta$}}_{{\text{ILS}}}=\Bigl[ \sum_{u} J_{u}^{\alpha} \otimes DF^\alpha({\bf q}_u^\alpha)+\sum_u B^{\alpha\alpha} J^\alpha_{u}  \otimes D\hat{H}^{\alpha \alpha}({\bf q}_u^{\alpha}) \Bigr] \text{\boldmath$\eta$}_{{\text{ILS}}}.
\end{equation}
Note the perturbation $\text{\boldmath$\eta$}_{{\text{ILS}}}$ is independent of the dynamics on the $\beta$ layer, i.e., of both $DF^\beta({\bf q}_u^\beta)$ and $D\hat{H}^{\beta \beta}({\bf q}_u^\beta)$. This indicates that the stability of ILS symmetries can be studied through a specific class of the Master Stability Function, which is the same as for single-layer networks \cite{pecora2014cluster}.
The transverse IRRs associated with the ILS subgroup in  layer $\beta$ will have an analogous structure as \eqref{eta_ILS}. On the other hand, the remaining transverse IRRs will have structure,
\begin{equation} \label{eta_DLS}
\begin{split}
      \dot{\text{\boldmath$\eta$}}_{{\text{DLS}}} = & \left(
      \begin{bmatrix}
\sum_{u} J_{u}^{\alpha} \otimes DF^\alpha({\bf q}_u^\alpha)  & 0 \\
0 & \sum_{u} J_{u}^{\beta} \otimes DF^\beta({\bf q}_u^\beta) 
\end{bmatrix} \right. \\
\
 + & \left. \begin{bmatrix}
  \sum_u B^{\alpha\alpha} J^\alpha_{u}  \otimes D\hat{H}^{\alpha \alpha}({\bf q}_u^{\alpha})  & \sum_u B^{\alpha\beta}   J^\beta_{u}  \otimes D\hat{H}^{\alpha \beta}({\bf q}_u^{\beta}) \\
\sum_u B^{\beta\alpha} J^\alpha_{u}  \otimes D\hat{H}^{\beta \alpha}({\bf q}_u^{\beta}) &  \sum_u B^{\beta\beta} J^\beta_{u}  \otimes D\hat{H}^{\beta \beta}({\bf q}_u^{\beta}) \\
\end{bmatrix} \right)
  \text{\boldmath$\eta$}_{{\text{DLS}}}. 
\end{split}
\end{equation}
 Note that the perturbations $\text{\boldmath$\eta$}_{{\text{DLS}}}$ depend on the dynamics of the systems in both layers $\alpha$ and $\beta$, through the mixed blocks appearing in Eq.\ \eqref{eta_DLS}. The structure of Eq. \eqref{eta_DLS} is substantially different than that of Eq. \eqref{eta_ILS},  which may lead to dramatic effects in terms of stability.

Consider now the multilalyer network in Fig.\ 1. The matrix $T$ is equal to 
\begin{equation} \label{eq:example_T}
T=\left[\begin{array}{cccccccc}
    0.5 &   0.5  &        0.5 &   0.5  & 0 &  0    &     0     &    0  \\
    0  &       0 &   0   &       0 &         1   &  0  & 0  &       0    \\
    0 & 0 & 0 & 0 & 0  &  \frac{1}{\sqrt{2}}  & 0 & \frac{1}{\sqrt{2}} \\
    0 & 0 & 0 & 0 & 0 & 0 & 1  & 0 \\
    \hline
    0.5 & -0.5 & 0.5 & -0.5 & 0 & 0 &0 & 0  \\
    0.5  &  -0.5 &   -0.5  &  0.5    &     0 &       0  &       0    &     0  \\
    \hline
    0.5  &  0.5  & -0.5 &  -0.5  & 0 & 0 & 0 & 0\\
    0  &  0 & 0 &  0 & 0 & \frac{1}{\sqrt{2}}  & 0 & -\frac{1}{\sqrt{2}}  
\end{array}\right] \hspace*{-6mm}
\begin{array}{l}
\left.
\begin{array}{l}\\\\\\\\\end{array}\right\} \text{SYNC}\\
\left.
\begin{array}{l}\\\\\end{array}\right\} \text{ILS breakings}\\
\left.
\begin{array}{l}\\\\\end{array}\right\} \text{DLS breakings}
\end{array}
\end{equation}


 
\normalsize 
 

The matrix $B=T A T^T$ is equal to:
\begin{equation} \label{Eq:Bfig1}
B = \left[\begin{array}{cccc|cc|cc}
    2 & 2 & \sqrt{2}  &  0 &  0  & 0  & 0 & 0\\
     2 & 0 & 0 &   0 &  0  &  0  & 0  & 0\\
     \sqrt{2} & 0 & 0 & \sqrt{2} & 0 & 0 & 0 & 0\\
     0 & 0 & \sqrt{2} & 0 &  0 & 0 & 0  & 0\\ \hline
    0 & 0 &  0 &  0 &  -2 &  0 &  0 &     0  \\
    0  &  0 & 0 &  0 &  0 &  0 &  0 & 0 \\ \hline
    0  &  0 & 0 &  0 &  0 &  0 & 0 & \sqrt{2}  \\
    0  &  0 & 0 &  0 &  0  &  0  & \sqrt{2} & 0  \\
\end{array}\right] \hspace*{-6mm}
\begin{array}{l}
\left.
\begin{array}{l}\\\\\\\\\end{array}\right\} \text{SYNC block}\\
\left.
\begin{array}{l}\\\\\end{array}\right\} \text{ILS blocks}\\
\left.
\begin{array}{l}\\\\\end{array}\right\} \text{DLS block}
\end{array}
\end{equation}
As can be seen, the matrix $B$ is the direct sum of one SYNC block, two independent ILS blocks (corresponding to breakings in the $\alpha$ layer which are independent of the $\beta$ layer), and one DLS block (corresponding to breakings in the $\alpha$ layer and in the $\beta$ layer which are dependent on each other.) The transverse ILS and DLS blocks describe linear (local) stability of clusters, when the dynamics is linearized about the quotient network time evolution. In what follows we will pay particular attention to stability of DLS blocks, which are a specific feature of multilayer networks.

 To provide analytical insight, we first consider a two-layer network described by the following discrete-time dynamics,
 
 \begin{equation} \label{eq:discrtime}
 \begin{array}{rcl}
     {\bf{x}^\alpha}^{k+1} &=& \mbox{rem}(c^\alpha {\bf{x}^\alpha}^{k} + \sigma A^{\alpha \alpha} {\bf{x}^\alpha}^{k} + \sigma A^{\alpha \beta} {\bf{x}^\beta}^{k})
\\
     {\bf{x}^\beta}^{k+1} &=& \mbox{rem}(c^\beta {\bf{x}^\beta}^{k} + \sigma A^{\beta \beta} {\bf{x}^\beta}^{k} + \sigma A^{\beta \alpha} {\bf{x}^\alpha}^{k}),
\end{array}
\end{equation}
where the vectorial function $\mbox{rem}({\bf{x}})$ returns a vector whose entries are the remainder of the integer division of the entries of the vector ${\bf{x}}$ by the scalar $1$ and $c^\alpha$ and  $c^\beta$ are tunable layer-specific  scalar parameters. 

Stability is described by the following set of equations,
\begin{equation}
    \begin{array}{rcl}
     \delta {\bf{x}^\alpha}^{k+1} &=& c^\alpha {\delta \bf{x}^\alpha}^{k} + \sigma A^{\alpha \alpha} {\delta \bf{x}^\alpha}^{k} + \sigma A^{\alpha \beta} {\delta \bf{x}^\beta}^{k}
\\
     \delta {\bf{x}^\beta}^{k+1} &=& c^\beta {\delta \bf{x}^\beta}^{k} + \sigma A^{\beta \beta} {\delta \bf{x}^\beta}^{k} + \sigma A^{\beta \alpha} {\delta \bf{x}^\alpha}^{k}.
\end{array}
\end{equation}

We now consider a generic DLS block of the form $\begin{pmatrix} a & b \\ b & a \end{pmatrix}$ ($a=0$ and $b=\sqrt{2}$ in the DLS block of Eq.\ \eqref{Eq:Bfig1}), to which corresponds a perturbation of the form,
\begin{equation} \label{eq:eta_DLS} 
      {\text{\boldmath$\eta$}}_{{\text{DLS}}}^{k+1} =  \left(
      \begin{matrix}
 c^\alpha+  \sigma a   &  \sigma b   \\
 \sigma b  & c^\beta+  \sigma a \\
\end{matrix} \right)
  \text{\boldmath$\eta$}_{{\text{DLS}}}^k, 
\end{equation}
with $c^\alpha \neq c^\beta $ and the case $c^\alpha=c^\beta$ corresponding to an ILS perturbation. The eigenvalues have the following expression,
\begin{equation} \label{rho}
    \rho^{\pm}=\bar{c} + a \sigma \pm  (b^2 \sigma^2 + \delta_c^2)^{1/2},
\end{equation}
where $\bar{c}=(c^\beta+c^\alpha)/2$ is the average layer-specific parameter and $\delta_c=(c^\beta-c^\alpha)/2$ measures how different are the systems in the two layers (with $\delta_c=0$ corresponding to the case of an ILS.) From this equation we see the larger (smaller) eigenvalue is an increasing (decreasing) function of $\delta_c$, and it follows that the best condition in terms of stability is achieved for $\delta_c=0$. We thus conclude that DLS perturbations are more difficult to stabilize than ILS perturbations.

We present here a conjecture expanding on the above conclusion: that DLS clusters are generally more difficult to stabilize compared to the case in which the systems in different layers are of the same type.  We base this on the following reasoning.  Consider an ideal situation in which
 at first the parameters of the systems in the different layers are identical and then they are increasingly perturbed to take on different values in different layers. The Gershgorin circle's theorem states that each eigenvalue of a matrix lies within at least one of the Gershgorin discs centered at the entries on the main diagonal and having radius equal to the sum of the off diagonal entries. As perturbing the individual system's parameters corresponds to varying the centers of the Gershgorin discs (but not the radius), we can expect the eigenvalues to become increasingly spread out as the systems are made increasingly different from one another, resulting in reduced stability.

The particular network in Fig.\ 1 has one transverse irreducible representation (DLS) in the form of Eq. \eqref{eq:eta_DLS},  with $a=0$ and $b=\sqrt{2}$, 
corresponding to simultaneous loss of synchronization between nodes $(1,3)$ in the $\beta$ layer and between the pairs of nodes $(1,2)$ and $(3,4)$ in the $\alpha$ layer. For this multilayer network, we consider the dynamics of Eq.\ \eqref{eq:discrtime}  and study the effects of changing the parameters $c^\alpha$ and $c^\beta$ on stability. For the maps described by Eq.\ \eqref{eq:discrtime}, as well as in the following for other kind of systems, we measure the pairwise synchronization error, defined as,
\begin{equation}
    E^\alpha_{ij}=<\|{\bf{x}}^\alpha_i-{\bf{x}}^\alpha_j\|>,
\end{equation}
between nodes $i$ and $j$ in layer $\alpha=1,...,M$, where the symbol $<...>$ indicates a temporal average, computed after the transient has elapsed.

Panel {\textbf a} of Fig.\ \ref{fig:rho} shows $E_{13}^\alpha$ and $E_{13}^\beta$ vs the parameter $\delta_c$. $\delta_c=0$ corresponds to identical systems in the two layers, increasing values of $\delta_c$ indicate the systems in the two layers are increasingly different. Synchronization is simultaneously lost in both layers for $\delta_c \gtrapprox 0.35$. Panel {\textbf b} is a plot of the eigenvalues $\rho^+$ and $\rho^-$ in Eq.\ \eqref{rho} as a function of $\delta_c$, which shows that $\rho^+$ ($\rho^-$) increases (decreases) with $\delta_c$. Loss of stability occurs when either one of the two eigenvalues is either larger than $1$ or smaller than $-1$. It is thus expected that stability may be lost for increasing values of $\delta_c$, i.e., as the systems in the two layers become increasingly different. From Panel {\textbf b} of Fig.\ \ref{fig:rho} we see that
$\rho^+$ grows larger than $1$ for $\delta_c \gtrapprox 0.35$. 


\begin{figure}[htbp!]
    \centering
    \includegraphics[width=15cm]{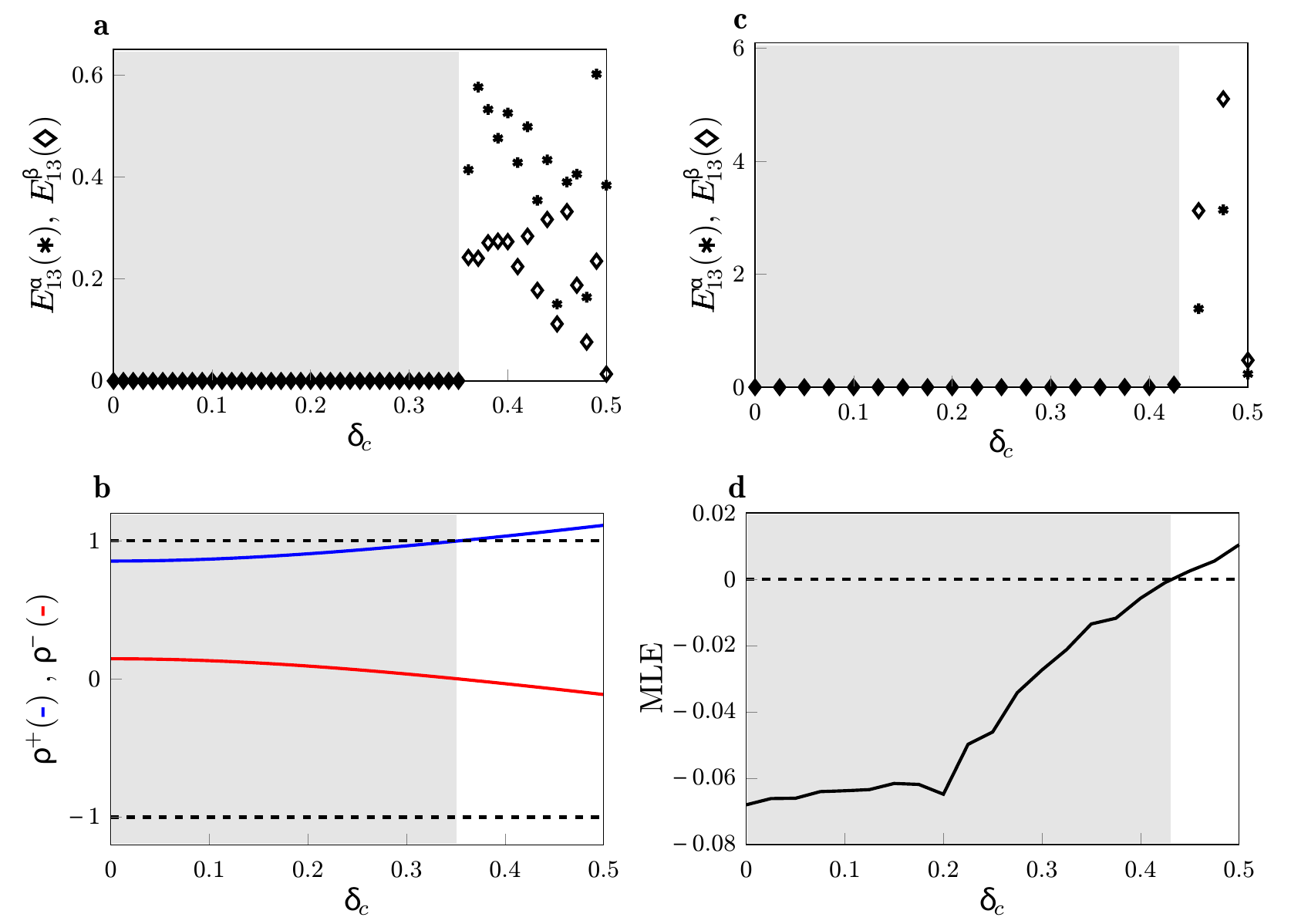}
    \caption{\textbf{Synchronization of DLS clusters.} {\textbf{a}} Discrete time maps. Synchronization errors $E_{13}^\alpha$ (asterisks) and $E_{13}^\beta$ (diamonds) vs the parameter $\delta_c$. $\delta_c=0$ corresponds to identical systems between the $\alpha$ and in the $\beta$ layers. Increasing values of $\delta_c$ are for increasingly different systems in the two layers. Synchronization is lost in both layers for $\delta_c \gtrapprox 0.35$. {\textbf{b}} Discrete time maps. Eigenvalues $\rho^+$ and $\rho^-$ in Eq. \eqref{rho}. As can be seen, $\rho^+$ ($\rho^-$) increases (decreases) with $\delta_c$. Loss of stability occurs when either one of the two eigenvalues is either larger than $1$ or smaller than $-1$. $\rho^+$ becomes larger than $1$ for $\delta_c \gtrapprox 0.35$. {\textbf{c}} Van Der Pol oscillators. Synchronization errors $E_{13}^\alpha$ (asterisks) and $E_{13}^\beta$ (diamonds) vs the parameter $\delta_c$. $\delta_c=0$ corresponds to identical systems between the $\alpha$ and in the $\beta$ layers. Increasing values of $\delta_c$ are for increasingly different systems in the two layers. Synchronization is lost in both layers for $\delta_c \gtrapprox 0.4$. {\textbf{d}} Van Der Pol oscillators. Maximum Lyapunov Exponent of the DLS block \eqref{eta_DLS} vs the parameter $\delta_c$. }
    \label{fig:rho}
\end{figure}

We then consider the case of Eqs.\ \eqref{eq:gen_dyn} with $M=2$ layers, $A^{\alpha \alpha}, A^{\beta \beta}, A^{\alpha \beta}=(A^{\beta \alpha})^T$ corresponding to the multilayer network in Fig.\ 1, $n^\alpha=n^\beta=2$, ${\bf x}^\alpha=[{x}^\alpha, {y}^\alpha]$, ${\bf x}^\beta=[{x}^\beta, { y}^\beta]$,
\begin{equation}
{\bf F}^\alpha({\bf x}^\alpha)=\left[\begin{array}{c}
     y^\alpha  \\
     -x^\alpha-0.2 y^\alpha [(x^\alpha)^2-c^\alpha]
\end{array}\right], \quad \quad 
{\bf F}^\beta({\bf x}^\beta)=\left[\begin{array}{c}
     y^\beta  \\
     -x^\beta-0.2 y^\beta [(x^\beta)^2-c^\beta] 
\end{array} \right],
\end{equation}
which both correspond to the dynamics of the Van der Pol oscillator, with layer-dependent parameters $c^\alpha$ and $c^\beta$. Moreover, we set
\begin{equation}
{\bf H}^{\alpha \beta}({\bf x})=\begin{pmatrix}
-1 & 0\\ 0 & 0
\end{pmatrix} {\bf x},
\end{equation}
for all pairs $\alpha,\beta=1,2$ and $\sigma^{\alpha \beta}=0.15$ for all pairs $\alpha,\beta=1,2$. We set the layer-dependent parameters $c^\alpha=(1+\delta_c)$ and $c^\beta=(1-\delta_c)$, so that $\delta_c=0$ corresponds to identical systems in the two layers and increasing values of $\delta_c$ indicate the systems in the two layers are increasingly different. We then numerically investigate stability of the DLS as the parameter $\delta_c$ is increased. This can be seen in panel {\textbf c} of Fig.\ \ref{fig:rho}, which shows that synchronization between oscillators $1$ and $3$ from the $\alpha$ layer and synchronization between oscillators $1$ and $3$ from the $\beta$ layer is simultaneously lost for $\delta_c \gtrapprox 0.4$. Panel {\textbf d} of Fig.\ 2 shows the numerically computed Maximum Lyapunov exponent of the DLS block \eqref{eta_DLS}. We also note that throughout the whole $\delta_c$ interval considered, neither nodes $1$ and $2$, nor nodes $1$ and $4$ from the $\alpha$ layer ever synchronize (not shown). Overall, Fig.\ \ref{fig:rho} shows that increased heterogeneity between the nodes in the two layers, can lead to loss of DLS stability. Further numerical evidence of this is presented for the cases of the Lorenz oscillator and of the Roessler oscillator in Supplementary Note 4. Supplementary Note 5 investigates how varying the intra-layer and inter-layer coupling strengths affects CS stability.

\subsection*{An experimental testbed for cluster synchronization}\label{sec: experiment}
We apply the techniques from the previous sections on an experimental testbed circuit. 
We implement three different kinds of electronic oscillators, using widely available, affordable components that can be assembled on breadboards. 
Simplicity, low-cost, ease of fabrication and availability of a large volume of previous studies make electronic circuits an ideal testbed for multilayer network studies \cite{blaha2019cluster}.
The circuit is composed of three different kinds/layers of nodes: one linear resonator, which we call the ``jumper'', two FitzHugh-Nagumo (FHN) oscillators, and four Colpitts oscillators.
The jumper is a linear resonator (i.e., without input, its oscillations damp to zero) with two-dimensional governing equations;
the FHN is a relaxational nonlinear oscillator with two-dimensional governing equations~\cite{Keener_1983jy}; 
the Colpitts is a sinusoidal nonlinear oscillator with three-dimensional governing equations~\cite{Kennedy_1994}.
All three kinds of nodes have similar uncoupled frequencies. A full schematic of the experimental circuit is included in Figure \ref{fig:circuit}.
Figure \ref{fig:circuit} also states the measured parameter values of each component. 

{Figure  \ref{fig:circuit_mul} shows a simplified multilayer schematic of the circuit with the jumper as the $\alpha$ layer, the FHNs as the $\beta$ layer, and the Colpitts as the $\gamma$ layer.}
We couple the different oscillators in three ways. 
 On the Colpitts layer, we induce inductive coupling through mutual inductance, red in Fig.\  \ref{fig:circuit_mul}.  Between the Colpitts layer and the FHN layer we introduce resistive coupling; this couples each FHN circuit with two Colpitts, blue in Fig.\  \ref{fig:circuit_mul}.
 Between the FHN and jumper layer, we induce inductive coupling between each FHN and an inductor in the jumper, orange in Fig.\  \ref{fig:circuit_mul}.

\begin{figure}[h!]
    \centering
    \includegraphics[width=\textwidth]{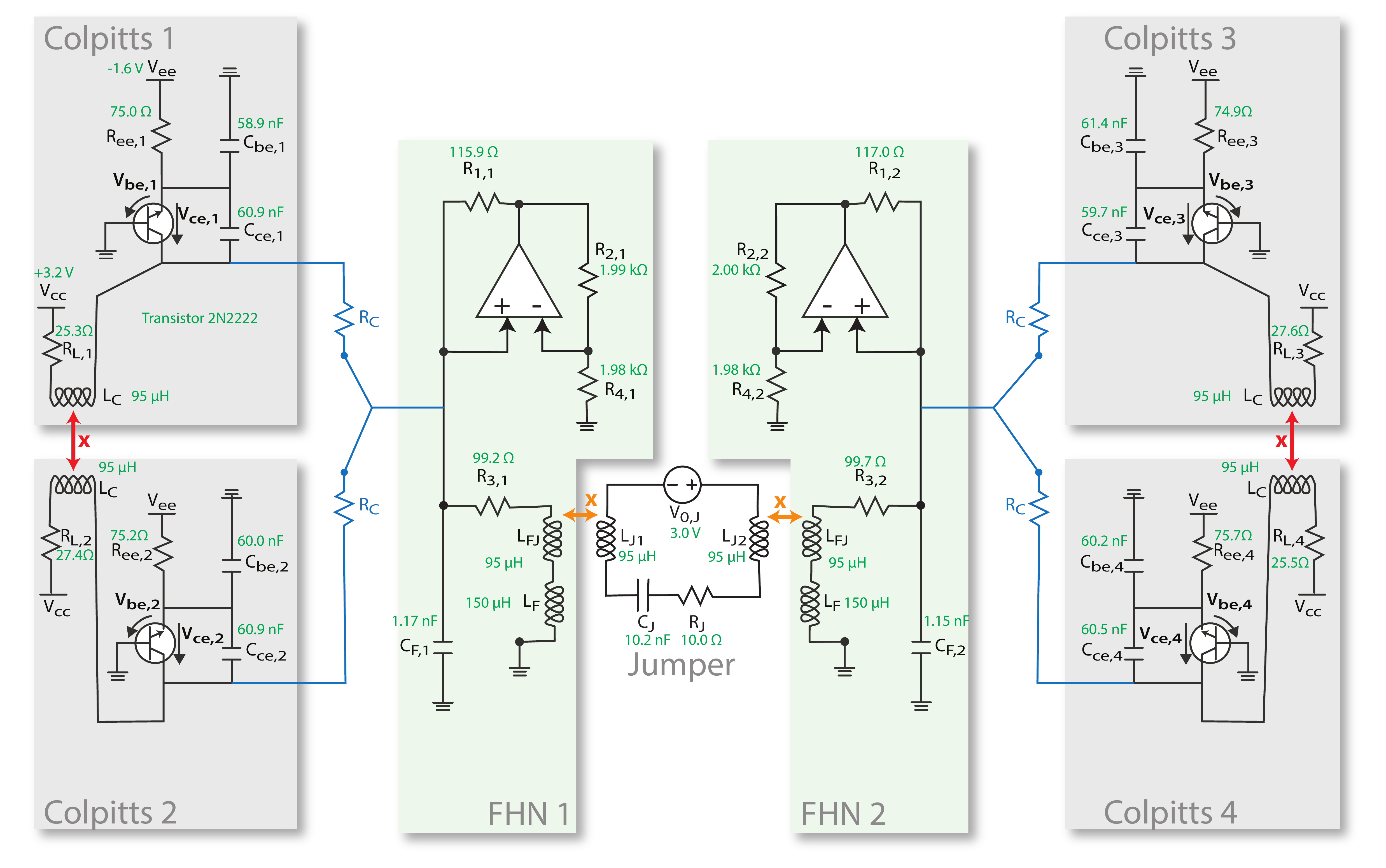}
    \caption{\textbf{Schematic of the experimental setup.} {$R_C=2.2k\Omega.$ We vary the magnetic coupling between Colpitts 1 and 2 and Colpitts 3 and 4, $k_C$, from $-0.4-0.4$ by varying separation $x$ (red). We hold the separation between the FHNs and the jumper  (x in orange) constant such that the coefficient of mutual inductance, $k_{FJ}$, is maintained equal to 0.35. }}
    \label{fig:circuit}
\end{figure}

\begin{figure}[hbtp!]
    \centering
    \includegraphics[width=0.8\textwidth]{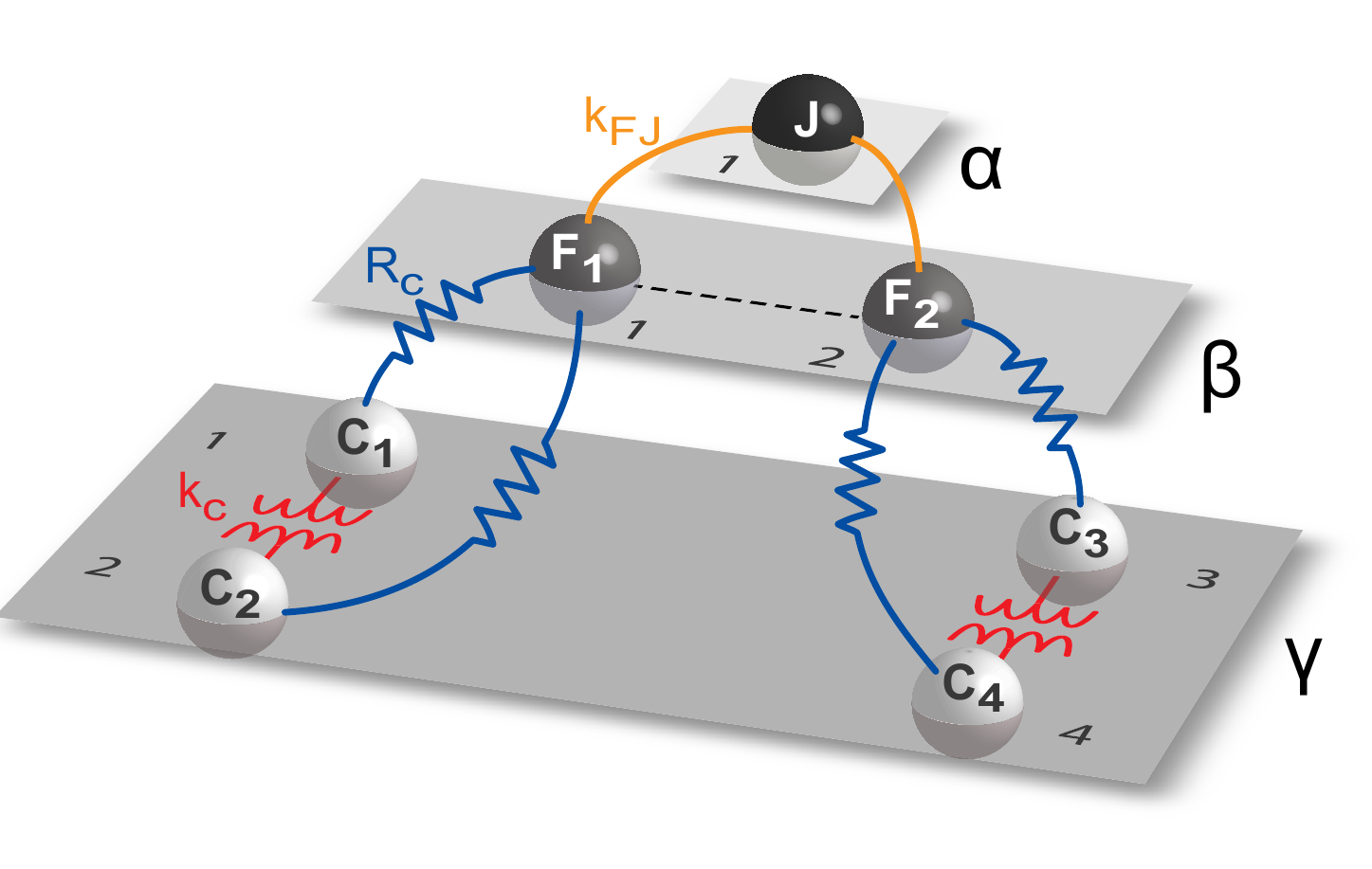}
    \caption{\textbf{Multilayer representation of the experimental circuit.} {Layer $\alpha$ contains the jumper node. Layer $\beta$ contains two FHN oscillators.  Layer $\gamma$ contains four Colpitts oscillators with inductive intralayer coupling shown in red. The interlayer resistive coupling between the $\beta$ and $\gamma$ layers is shown in blue. The interlayer inductive coupling between the $\alpha$ and $\beta$ layers is shown in orange; this coupling introduces a virtual FHN intralayer coupling which we discuss in the analysis of the system.}}
    \label{fig:circuit_mul}
\end{figure}


As shown in the above schematic and equations, we can describe the system with three layers composed as follows.
Layer $\alpha$ includes the jumper alone,
layer $\beta$ includes 2 FHNs and the intra-layer connections, with $\mathcal{G}^\beta=\{(),(1,2)\}$ and
layer $\gamma$ includes 4 Colpitts and one intra-layer connection that connects nodes 1 with 2, and nodes 3 with 4. The group of symmetries of this layer is
\begin{equation}
\mathcal{G}^\gamma=\{(),(1,2),(3,4),(1,2)(3,4),(1,3)(2,4),(1,4)(2,3)\}.
\end{equation}
We have the inter-layer connections:
\begin{equation}
A^{\alpha\beta}=\begin{bmatrix}1 & 1\end{bmatrix} = (A^{\beta\alpha})^T,\qquad
A^{\beta\gamma}=\begin{bmatrix}1 & 1 & 0 & 0 \\ 0 & 0 & 1& 1\end{bmatrix} = (A^{\gamma\beta})^T.
\end{equation}
All the permutations in the groups $\mathcal{G}^\alpha,\ \mathcal{G}^\beta,\ \mathcal{G}^\gamma$ are compatible, then
$\mathcal{H}^\alpha=\mathcal{G}^\alpha$, $\mathcal{H}^\beta=\mathcal{G}^\beta$, $\mathcal{H}^\gamma=\mathcal{G}^\gamma.$

The interlayer connections define the conjugate classes
\begin{equation}
\mathcal{K}^\alpha=();\;
\mathcal{K}_1^\beta=(),\ \mathcal{K}_2^\beta=(1,2);\; \mathcal{K}_1^\gamma=\{(),(1,2),(3,4),(1,2)(3,4)\},\  \mathcal{K}_2^\gamma=\{(1,3)(2,4),(1,4)(2,3)\}.
\end{equation}
This set defines the group of symmetry of the multilayer network.
There are three possible clustered patterns: (1a) layer $\beta$ and layer $\gamma$ fully synchronized,  (1b) layer $\beta$ fully synchronized and layer $\gamma$ clustered synchronized (either 1 with 3 and 2 with 4 or, equivalently, 1 with 4 and 2 with 3), and (2) layer $\beta$ not synchronized and layer $\gamma$ synchronized in clusters (1 with 2 and 3 with 4). The stability of each one of these clustered patterns is analyzed in detail in the Supplementary Note 6 and is here summarized in Figure~\ref{fig:patt}. Plot {\textbf a} shows the two MLEs transverse to solution 1a. Solid and dotted lines refer to transverse blocks $B^2$ and $B^{1b}$, respectively (blocks defined in the Supplementary Note 6).  Pattern 1a is stable when both the curves are negative. {\textbf b} shows MLEs transverse to stable solutions on the synchronized manifold 1b. Blue curves refer to the transverse MLE of solution of kind 1a, while the red curve refers to the transverse MLE of solution of kind 1b.  {\textbf c} shows MLEs transverse to stable solutions on the synchronized manifold 2. Blue curves refer to the transverse MLE of solution of kind 1a, while the yellow curves refer to the transverse MLE of solution of kind 2. 
    
  {
The equivalence class subgroup $\mathcal{K}_1^\gamma$ shows the ILS's for the $\gamma$ layer. The symmetry breaking of the $(1,2)(3,4)$ permutation for the $[1,2,3,4]$ cluster causes the cluster to break into two clusters, $[1,3],[2,4]$ or $[1,4],[2,3]$. Those ILS permutations are what allows layer $\beta$ to remain synchronized in case (1b) above. Note, however, that breaking of the individual permutations like (1,2) by itself will cause a breaking of one of the $\mathcal{K}_2^\gamma$ permutations, so it is essential to check that the symmetry breakings of $\mathcal{K}_1^\gamma$ are not inducing other symmetry breakings in the same layer, but from other cosets.  If they are not, then the other layers are unaffected by the loss of an ILS.
}

\begin{figure}[h!]
\centering
\includegraphics[width=1\textwidth]{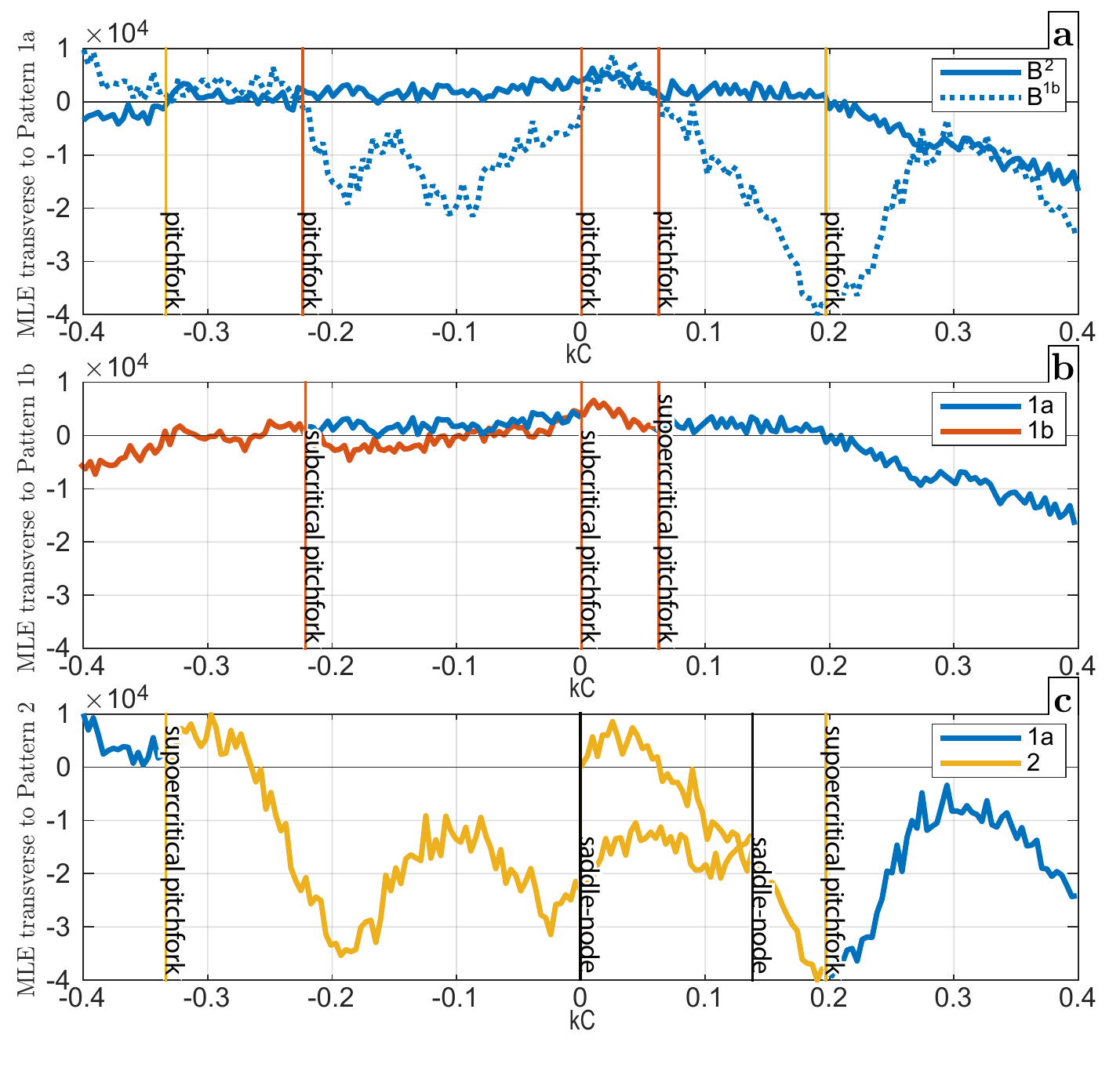}
\caption{\textbf{Stability of different cluster synchronization patterns.} {\textbf a} MLEs transverse to solution 1a. Solid and dotted lines refer to different transverse blocks. When a line crosses zero it identifies a symmetry breaking bifurcation in one of the other invariant manifold (red: bifurcations on CS manifold 1b - yellow: bifurcations on CS manifold 2). Pattern 1a is stable when both the curves are negative. {\textbf b} MLEs transverse to stable solutions on the synchronized manifold 1b. Blue curves refer to the transverse MLE of solution of kind 1a, while the red curve refers to the transverse MLE of solution of kind 1b. Vertical lines indicate the bifurcations of the quotient system. {\textbf c} MLEs transverse to stable solutions on the synchronized manifold 2. Blue curves refer to the transverse MLE of solution of kind 1a, while the yellow curves refer to the transverse MLE of solution of kind 2. Vertical lines indicate the bifurcations of the quotient system. Colored lines represent symmetry breaking bifurcation (inferred from {\textbf a}), while black lines are other bifurcations.}
\label{fig:patt}
\end{figure}

\begin{figure}[h!]
\centering
\includegraphics[width=0.88\textwidth]{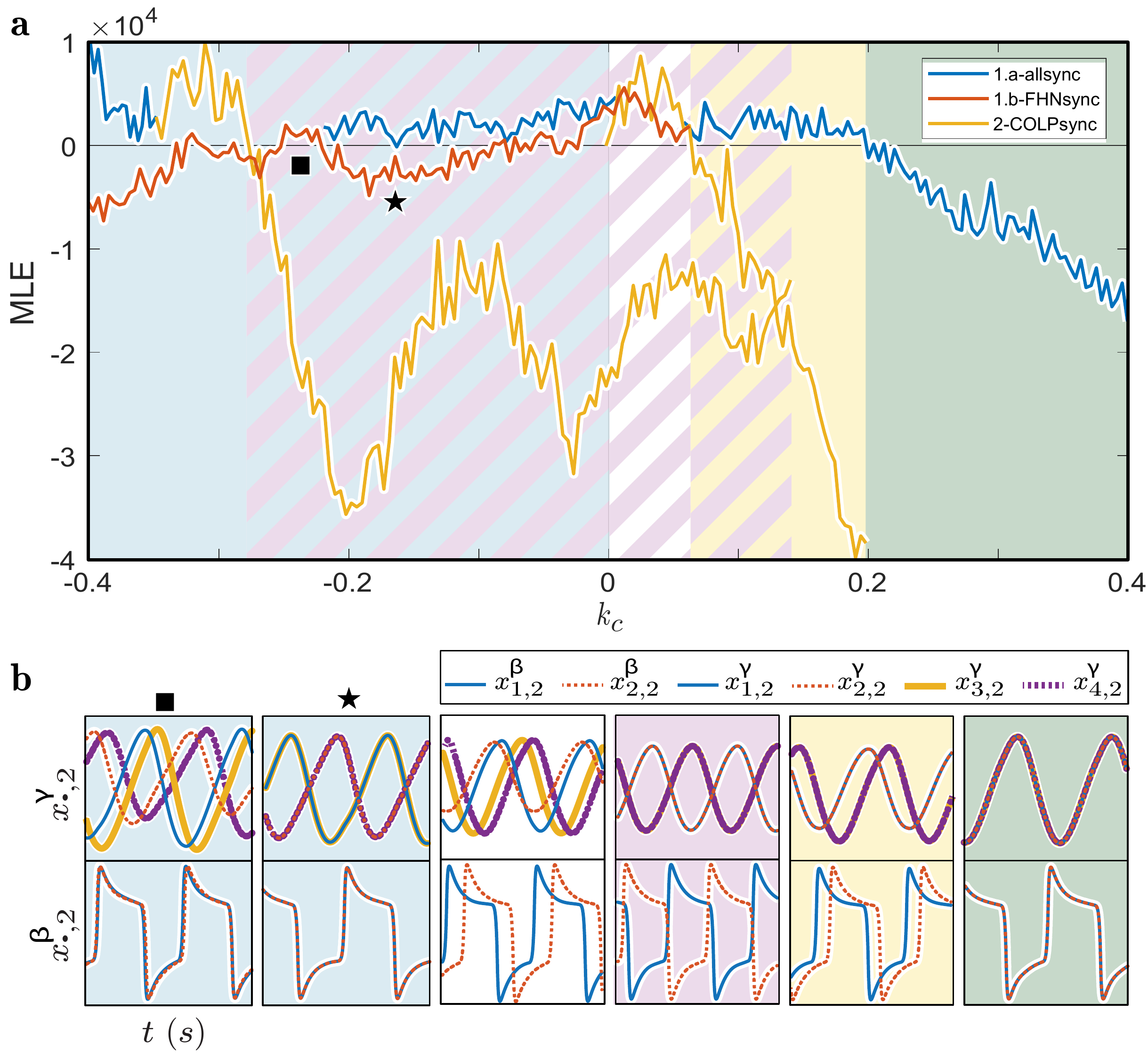}
\caption{\textbf{Onset of different cluster synchronization patterns as varying $k_C$.}
We observe five kinds of clustered behavior, which are denoted by the shading of the background.
Shading with two-colored striping indicates bistability between the two states represented by each color.
\textbf{a} MLE transverse to each identified clustered solution.
(Green) Pattern 1a;
(Yellow) Pattern 2-($\pi/2$);
(Lavender) Pattern 2-($\pi$);
(Blue) Pattern 1b;
and
(White) No pattern.
\textbf{b} different stable solutions present in each of the structurally different regions (identified by a bifurcation of a stable solution both in one of the quotient networks or transverse to them).
(Star) A stable solution of kind 1b occurs when the red line is negative;
(Square) a modified version of kind 1b occurs when the red line is positive.
}
\label{fig:summary}
\end{figure}

With all three patterns, we can draw Figure \ref{fig:summary}, where we report the maximum of the two blue curves in Figure \ref{fig:patt}{\textbf a} (blue), the red curve in Figure \ref{fig:patt}{\textbf b} (red) and the yellow curves in Figure \ref{fig:patt}{\textbf c} (yellow).
We can thus identify the parameter regions in which the different analyzed behaviors are present.
Below we comment on Figure~\ref{fig:summary} and explain the sequence of bifurcations that occur in the system as the parameter $k_C$ is decreased (i.e., from right to left looking at the figure.) For large positive coupling, 1a is stable and it is the only attractor of the system.
    This region is shaded in green in Figure~\ref{fig:summary}.
    At $k_C=0.2$, we observe the split of the two nodes in layer $\beta$ (the FHNs, transverse direction toward 2 in the $T$ matrix). The symmetry breaking (pitchfork) bifurcation is supercritical, and a solution of kind 2 is  born. This region is shaded in yellow.
     At $k_C=0.14$ a second stable clustered solution of type 2 is born through a saddle-node bifurcation (via the quotient network 2 dynamics). {This solution is characterized by a phase separation of the two FHN oscillators of $\pi$, and a slightly higher oscillations frequency.}
    It remains stable until $k_C$ is reduced to -0.28.
    We thus have a region with two solutions of kind 2: the one created at $k_C=0.2$ where the two FHNs are slightly phase shifted (whose MLE can be found looking at the yellow curve from right to left) and  the one created at $k_C=0.14$ where the
    two FHNs are anti-phase (whose MLEs can be found looking at the yellow curve from left to right).
    We call these two states pattern 2-($\pi/2$) and pattern 2-($\pi$), respectively.
    The state created at $k_C=0.14$ is shaded in lavender; the bistable region is striped lavender and yellow.
     At $k_C=0.07$, the clustered solution 2-($\pi/2$) loses its stability giving rise to an unsychronous solution, shaded in white. 
    This solution disappears through a saddle-node bifurcation at $k_C=0$. 
     For $k_C \in[-0.27,0]$, the system has two possible solutions. 
    The first (shaded lavender) is pattern 2-($\pi$). 
    The second (shaded in blue) has two possible behaviors: when the red curve is negative, it is of kind 1b (see star for example), or when the red curve is positive, it is a slightly non synchronized modification of pattern 1b (see square for example) in which the two FHNs remains locked, while the Colpitts are not perfectly synchronized. 
    This last solution is present because the inductively coupled Colpitts pairs are antiphase; 
    as a result, the FHN (which is coupled to both Colpitts) sees a very small net signal from the two Colpitts.
    Consequently, the Colpitts can very nearly synchronize even though they receive slightly different inputs.
    Finally, at $k_C=-0.28$ the pattern 2-($\pi$) solution undergoes a catastrophic symmetry breaking bifurcation, leaving 1b as the only possible attractor of the system (or its slight modification when the red curve is positive.) Experimental data showing qualitative agreement with the numerical results in  Figure~\ref{fig:summary} are included in the Supplemetary Note 7.

\section*{Discussion}\label{sec: discussion}

In this paper we study analytically, numerically, and experimentally the general problem of cluster synchronization (CS) in multilayer networks; {these systems are composed of heterogeneous components that may interact in multiple ways}. 
We first present a general set of equations that describes the dynamics of a multilayer network. 
We then define for the first time the group of symmetries of a multilayer network and explain how to compute it. An analysis of several datasets of real multilayer networks shows they possess large number of symmetries.

{We investigate the stability of the CS patterns which correspond to the orbits of the multilayer network.
The symmetries of each layer as well as the particular pattern of interlayer connections determine the CS patterns in multilayer networks; 
we consequently describe how each layer's symmetry group relates to the symmetry groups of other layers through interlayer connections.  
The interplay between the intralayer and interlayer couplings enables distinction between symmetries that affect the types of allowed dynamics in the other layers (DLS) and those that do not (ILS.)
The latter form a normal subgroup of the full symmetry group.
In particular, a symmetry-breaking bifurcation, associated  with the ILS subgroup, breaks up a clustered pattern in one layer of the multilayer network without altering the possible clusters in the other layers. }


With an IRR change of coordinates we decouple the stability problem into several simpler (lower dimensional) problems. 
First, we decouple perturbations along directions parallel to the synchronization manifold from those transverse to it; 
the latter determine stability of the clustered motion. 
Second, we decouple the equations for the transverse perturbations into several independent blocks; 
each block corresponds to the stability of either an individual cluster or a set of intertwined clusters \cite{pecora2014cluster}. 
When two or more clusters (which may belong to different layers) are intertwined, they are either all stable or all unstable. {We see that Dependent Layer Symmetries yield blocks of a different structure than those arising in the study of single layer networks, which we show has a profound effect on the stability of the clusters involving these symmetries. In particular, we show analytically for a specific class of networks that DLS clusters are more difficult to synchronize than in the case in which the systems in different layers are of the same type, which is also confirmed numerically in simulations involving multilayer networks of Van der Pol, Lorenz, and Roessler oscillators (Supplementary Note 5.)} 


{We performe experiments with a fully analog multilayer network with seven electronic oscillators of three different kinds coupled with two kinds of coupling.
{The testbed circuit has many features that occur in natural multilayer systems, like noise and parameter mismatches. The experiment is a good test of the theoretical framework because it allow us to understand how theoretical predictions, made with simplifying assumptions, can be a guide to better understand real phenomena. In fact, }
the experimental results largely match the theoretical predictions; we observe all the predicted cluster states in  the right part of the parameter space. 
}
{With respect to the experimental realization, the model includes several simplifications: 
 we assume all the nodes within the same layer are identical,  we assume that all the connections of the same type are identical,
 we use a simplified FHN model which neglects two resistors, 
 we use ideal models for the operational amplifiers and the transistors,
and we assume no noise and neglect any stray inductance, capacitance, or resistance.
We discuss why the experimental and theoretical results differ in our seven node electronic system in Supplementary Note 8. 
A broader, multisystem analysis of the robustness of the method to heterogeneity, noise, and network nonideality would enhance the utility of the method.}

A variety of real world systems are multilayer networks that can exhibit clustering. 
Dynamical situations, to which our analysis may be relevant include: opinion dynamics and consensus among individuals interacting through different communication systems, for which clustering may show up {on average}, see e.g., \cite{sorrentino2019symmetries},
 the dynamics of central pattern generators,  small networks of
similar neurons, which might show symmetries and clustering since 
synchronization is part of their dynamics, see e.g., \cite{lodi2018design}, and  
 electronic networks and in general man-made systems formed of many identical 
subsystems or nodes, which may produce cluster synchronization. 
{Recent work \cite{ishizaki2018graph} has studied how network symmetries may affect synchronization modes of power grids  and even suggested that symmetries may enhance complete synchronization in multilayer grids, characterized by the presence of different energy sources, such as power generators, wind turbines, solar, etc.} 
{Our work describes how clusters may arise in multilayer networks with a given structure; with further study, it may be possible to infer the structure of a multilayer network given the observation of clusters.
}

   {A main limitation of our approach is its scalability with the number of nodes of a multilayer network.  While the size of the network for which symmetries can be found is very large \cite{macarthur2008symmetry},   the size for which the stability analysis can be performed is typically much smaller \cite{cho2017stable}. We also model the systems in each layer as exactly identical; this is not a characteristic of experimental systems (for a discussion of the discrepancies between our experiment and our simulations see Supplementary Note 8.)
   A recent paper analyzed cluster synchronization in the presence of nearly identical systems \cite{sorrentino2016approximate}, though the approach of \cite{sorrentino2016approximate} is not easily generalizable  to the case of multilayer networks.}
   {
   Although our multilayer framework does not allow node heterogeneity and noise, we saw broad agreement between our predictions and  experimental circuit behavior; 
   this circuit contained slightly heterogeneous nodes with noise. Within this study, we cannot quantitatively state when noise or heterogeneity will qualitatively impact our predictions, but the experiment demonstrates that there is some tolerance. Further work will be needed to characterize these limitations.}
   
   {Finally, despite the generality of the theory proposed in this manuscript, several extensions are possible. An important direction for the research is to allow both intra-layer and inter-layer connections to be directed. Another direction is to study the formation of CS patterns that are not related to symmetries  in multilayer networks, see e.g. \cite{siddique2018symmetry}. {The case of group consensus which can be seen as a very special case (i.e., with linear dynamics) of the cluster synchronization problem considered here, has recently been studied in  \cite{sorrentino2020group}.}}

\section*{Data availability}
All data supporting the findings of this study are available within the article and its supplementary information files and from the corresponding author upon reasonable request.

\section*{Methods}

\subsection*{Formal Proofs}

\begin{theorem}
\label{subgroup}
$\Ha$ is a subgroup of $\Ga$ and $\Hb$ is a subgroup of $\Gb$.
\end{theorem}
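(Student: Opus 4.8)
The plan is to apply the standard two-step subgroup test to the finite group $\Ga$: I will show that $\Ha$ contains the identity and is closed under both products and inverses. The observation driving every step is that the intertwining relations \eqref{eq:conj} compose functorially, namely that if $g$ is paired with $h$ and $g'$ is paired with $h'$ (in the sense that each pair satisfies both equations of \eqref{eq:conj}), then the product $gg'$ and the inverse $g^{-1}$ are paired with $hh'$ and $h^{-1}$ respectively, and these candidate partners automatically lie in $\Gb$ precisely because $\Gb$ is itself a group.

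First I would check nonemptiness: taking $g=I_{N^\alpha}$ with partner $h=I_{N^\beta}$ makes both conditions in \eqref{eq:Ha} read $A^{\alpha\beta}=A^{\alpha\beta}$ and $A^{\beta\alpha}=A^{\beta\alpha}$, so the identity lies in $\Ha$. For closure under products, suppose $g_1,g_2\in\Ha$ with partners $h_1,h_2\in\Gb$; a direct computation gives
\begin{equation}
(g_1 g_2)A^{\alpha\beta}=g_1 A^{\alpha\beta}h_2=A^{\alpha\beta}h_1 h_2,
\end{equation}
and symmetrically $(h_1 h_2)A^{\beta\alpha}=A^{\beta\alpha}(g_1 g_2)$, so $g_1 g_2\in\Ha$ with partner $h_1 h_2\in\Gb$. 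For inverses, starting from $gA^{\alpha\beta}=A^{\alpha\beta}h$ I would left-multiply by $g^{-1}$ and right-multiply by $h^{-1}$ to obtain $g^{-1}A^{\alpha\beta}=A^{\alpha\beta}h^{-1}$, and treat the companion relation $hA^{\beta\alpha}=A^{\beta\alpha}g$ the same way; since $h^{-1}\in\Gb$, this places $g^{-1}$ in $\Ha$. The argument for $\Hb$ is identical after exchanging the roles of $\alpha$ and $\beta$ (equivalently, by transposing and using $A^{\alpha\beta}=(A^{\beta\alpha})^T$ together with $g^T=g^{-1}$ for permutation matrices).

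The only genuine subtlety, and the one place where I would phrase the statement carefully, is that the correspondence $g\mapsto h$ need not be single-valued: as the text observes, the nontrivial left and right null spaces of $A^{\alpha\beta}$ allow several $h$ to satisfy $gA^{\alpha\beta}=A^{\alpha\beta}h$ for a fixed $g$. This prevents us from reading off a homomorphism $\Ha\to\Gb$ directly, but it poses no obstacle to the subgroup proof, because each step above requires only the \emph{existence} of a valid partner, and existence is manifestly preserved under products and inverses. I therefore expect no hard part in the argument itself; the manipulations are routine applications of the intertwining identities, and the work lies entirely in stating the existential nature of the pairing precisely. Once this theorem is in hand, the later reduction to the single relation \eqref{eq:or} follows by the transpose observation just noted.
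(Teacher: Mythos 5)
Your proof is correct and follows essentially the same route as the paper's: check the identity, closure under products, and closure under inverses, in each case exhibiting the partner permutation ($h_1h_2$, $h^{-1}$) and invoking the fact that $\Gb$ is a group to conclude the partner lies in $\Gb$. If anything, you are slightly more careful than the paper, whose proof manipulates only the relation $gA^{\alpha\beta}=A^{\alpha\beta}h$ even though the definition \eqref{eq:Ha} formally requires both conjugacy relations, and your explicit remark that the pairing $g\mapsto h$ is existential rather than single-valued mirrors the paper's own observation about the nontrivial null spaces of $A^{\alpha\beta}$.
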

\begin{proof}
To prove the theorem we must show that the identity element of $\Ga$ is contained in $\Ha$, and whenever $g_1$ and $g_2$ are in $\Ha$, then so are $h_1^{-1}$ and $h_1 h_2$, so the elements of $\Ha$indeed form a group. We prove the theorem for $\Ha$. The proof for $\Hb$ is identical.  

{Identity}. Let $e_\alpha$ be the identity of $\Ga$ and $e_\beta$ be the identity of $\Gb$.  We see that $e_alpha$ is in $H^alpha$, since
\[
e_\alpha A^{\alpha\beta} = A^{\alpha\beta} = A^{\alpha\beta} e_\beta. 
\]

\emph{Inverse existence}. For all $g\in \Ha$ exist $g^{-1}\in \Ha$. By definition, if $g\in \Ha$ then there exists $h$ such that $g A^{\alpha\beta}=A^{\alpha\beta} h$. Consider the permutation $g^{-1}$, then
\[
g A^{\alpha\beta} = A^{\alpha\beta} h
\quad
\Rightarrow
\quad
g^{-1} g A^{\alpha\beta} = g^{-1} A^{\alpha\beta} h
\quad
\Rightarrow
\quad
A^{\alpha\beta} h^{-1} h = g^{-1} A^{\alpha\beta} h
\]
thus obtaining $A^{\alpha\beta} = g^{-1} A^{\alpha\beta} h \Rightarrow g^{-1} A^{\alpha\beta}=A^{\alpha\beta} h^{-1}$. Since $\Gb$ is a group, $h\in \Gb \Rightarrow h^{-1}\in\Gb$, thus proving that $g^{-1}\in  \mathcal{H}^\alpha$.

\emph{Closure}. We need to prove that if $g_1$ and $g_2$ are in $\Ha$, then the product $g_1 g_2$ is also in $\Ha$. By definition, since $g_1,\ g_2\in\Ha$, there exist $h_1,\ h_2$ such that $g_1 A^{\alpha\beta} = A^{\alpha\beta} h_1$ and $g_2 A^{\alpha\beta} = A^{\alpha\beta} h_2$. Then
\[
g_1 g_2 A^{\alpha\beta} = g_1 A^{\alpha\beta} h_2 = A^{\alpha\beta} h_1 h_2.
\]
Since $\Gb$ is a group, $h_1,h_2\in\Gb \Rightarrow h_1 h_2 \in \Gb$, thus proving that $g_1g_2\in\Ha$.
\end{proof}

A consequence of the theorem is that $g \in \Ha \Rightarrow g \in \Ga$: thus, to find $\Ha$, we have simply to take all the permutations $h\in\Gb$ and check which of the permutations in $\Ga$ respect the conjugacy relation \eqref{eq:conj}. 
Similarly, we can define the conjugate set Eq.~(\ref{eq:Hb}), find its elements in the same way and show that it is a subgroup of $\Gb$.

Since we are interested here only in undirected networks, we prove a corollary that simplifies the computations of $\Ha$ and $\Hb$.
\begin{corollary}
For the case of undirected networks we only need one of the following relationships to prove that $\Ha$ and $\Hb$ are subgroups,
(i) $g A^{\alpha\beta}=A^{\alpha\beta}h$
or 
(ii) $h A^{\beta\alpha}=A^{\beta\alpha}g$.
\end{corollary}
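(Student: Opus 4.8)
The plan is to show that, under the undirected assumption $A^{\beta\alpha}=(A^{\alpha\beta})^T$, the two conjugacy relations (i) and (ii) are in fact logically equivalent, so that verifying either one automatically yields the other. The subgroup property of $\Ha$ and $\Hb$ then follows verbatim from Theorem~\ref{subgroup}, whose identity, inverse, and closure arguments only ever invoked a single relation of the form $g A^{\alpha\beta}=A^{\alpha\beta}h$; nothing there used (ii) separately.

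The two structural facts I would exploit are: (a) for undirected networks $A^{\beta\alpha}=(A^{\alpha\beta})^T$, and (b) every permutation matrix is orthogonal, so that $g^{-1}=g^T$ and $h^{-1}=h^T$. Starting from relation (i), $g A^{\alpha\beta}=A^{\alpha\beta}h$, I would transpose both sides to obtain $(A^{\alpha\beta})^T g^T = h^T (A^{\alpha\beta})^T$, and then substitute (a) and (b) to get $A^{\beta\alpha} g^{-1} = h^{-1} A^{\beta\alpha}$. Left-multiplying by $h$ and right-multiplying by $g$ yields $h A^{\beta\alpha} = A^{\beta\alpha} g$, which is precisely relation (ii). The reverse implication is obtained by the identical transposition argument applied to (ii), using $(A^{\beta\alpha})^T=A^{\alpha\beta}$.

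Having established the equivalence of (i) and (ii), the definition of $\Ha$ in Eq.~\eqref{eq:Ha} collapses: demanding both ``$g A^{\alpha\beta}=A^{\alpha\beta}h$ and $h A^{\beta\alpha}=A^{\beta\alpha}g$ for some $h\in\Gb$'' is the same as demanding just one of them, since the other is then automatic. One then simply reruns the three steps of Theorem~\ref{subgroup} (identity, inverse, closure), each of which produces the required companion permutation in $\Gb$ from a single relation, to conclude that $\Ha$, and by the symmetric argument $\Hb$, is a subgroup.

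I expect no serious obstacle here; the only point demanding care is the order-reversal and inverse bookkeeping under transposition, namely remembering that $(XY)^T=Y^T X^T$ and that orthogonality of permutation matrices is what converts the transposes $g^T,h^T$ into the inverses $g^{-1},h^{-1}$ needed to recover a clean conjugacy relation. Everything beyond that is a one-line rearrangement.
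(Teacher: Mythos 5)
Your proposal is correct and follows essentially the same route as the paper: both rest on transposing the conjugacy relation and invoking $A^{\beta\alpha}=(A^{\alpha\beta})^T$ together with orthogonality of permutation matrices ($g^{-1}=g^T$, $h^{-1}=h^T$) to convert relation (i) into relation (ii), after which the subgroup argument of Theorem~\ref{subgroup} applies verbatim. The only immaterial difference is bookkeeping: the paper first passes to $g^{-1}A^{\alpha\beta}=A^{\alpha\beta}h^{-1}$ via the group property and then transposes, whereas you transpose (i) directly and clear the resulting inverses by left- and right-multiplication by $h$ and $g$.
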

\begin{proof}
Let's choose to use relationship (i) to define $\Ha$ and $\Hb$. This means eliminating requirement (ii) in the $\Ha$ definition and replacing (ii) with (i) in the  $\Hb$ definition.
We can use the same logic we used in the previous theorem to show that these definitions of $\Ha$ and $\Hb$ also produce subgroups. The question is, are they the same as those of Eqs.~(\ref{eq:Ha}) and (\ref{eq:Hb})?  We can show that this is the case using the fact that for undirected coupling the overall coupling matrix is symmetric.  Hence, off-diagonal interlayer coupling blocks are related as in $A^{\beta\alpha}=(A^{\alpha\beta})^T$.
For permutations $h^{-1}=h^T$, etc. Hence, since $\Ha$ and $\Hb$ are groups, $h^{-1}$ and $g^{-1}$ also obey the defining equation (i), so
\begin{equation} \label{eq:hAba_Abag}
\begin{split}
g^{-1} A^{\alpha\beta}=A^{\alpha\beta}h^{-1}\iff g^TA^{\alpha\beta}=A^{\alpha\beta}h^T \iff (A^{\alpha\beta})^Tg=h (A^{\alpha\beta})^T \\
\iff (A^{\alpha\beta})^Tg=h (A^{\alpha\beta})^T\iff h A^{\beta\alpha}=A^{\beta\alpha}g
\end{split}
\end{equation}
\end{proof}
A consequence of this corollary is that for the case of undirected networks, if $h\in\Gb$ is involved in a conjugacy relation for some $g$, then it is in $\Hb$. As a result, one does not need to search for both $\Ha$ and $\Hb$, but populating $\Ha$ automatically populates $\Hb$. 

\begin{theorem}
\label{numelemsEqClassK}
$\mathcal K^\alpha_1$ is a subgroup of $\mathcal {H}^\alpha$.
\end{theorem}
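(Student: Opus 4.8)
The plan is to first translate the statement into the language of stabilizers, and then verify the subgroup axioms directly. By definition $\mathcal{K}^\alpha_1$ is the equivalence class of the identity $e_\alpha$ under the relation $\sim$, where $g \sim g'$ means $g A^{\alpha\beta} = g' A^{\alpha\beta}$. Since $e_\alpha A^{\alpha\beta} = A^{\alpha\beta}$, membership $g \in \mathcal{K}^\alpha_1$ is equivalent to the single condition $g A^{\alpha\beta} = A^{\alpha\beta}$. Thus I would open the proof by recording the reformulation
\[
\mathcal{K}^\alpha_1 = \{\, g \in \Ha : g A^{\alpha\beta} = A^{\alpha\beta} \,\},
\]
which exhibits $\mathcal{K}^\alpha_1$ as the stabilizer of $A^{\alpha\beta}$ inside $\Ha$ under the left-multiplication action. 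This reformulation is the conceptual heart of the argument; everything afterward is a routine check.

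Next I would verify the three defining properties of a subgroup, relying throughout on Theorem \ref{subgroup}, which guarantees that $\Ha$ is itself a group. This is what lets me avoid re-checking that products and inverses land back in $\Ha$: I only need to confirm that they still fix $A^{\alpha\beta}$. For the identity, $e_\alpha \in \Ha$ and $e_\alpha A^{\alpha\beta} = A^{\alpha\beta}$, so $e_\alpha \in \mathcal{K}^\alpha_1$. For closure, if $g_1, g_2 \in \mathcal{K}^\alpha_1$ then $g_1 g_2 A^{\alpha\beta} = g_1 (g_2 A^{\alpha\beta}) = g_1 A^{\alpha\beta} = A^{\alpha\beta}$, and $g_1 g_2 \in \Ha$ by closure of $\Ha$.

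The only step requiring a small manipulation is closure under inversion. Given $g \in \mathcal{K}^\alpha_1$, i.e.\ $g A^{\alpha\beta} = A^{\alpha\beta}$, I would left-multiply both sides by $g^{-1}$ to obtain $A^{\alpha\beta} = g^{-1} A^{\alpha\beta}$, so that $g^{-1}$ also fixes $A^{\alpha\beta}$; since $g^{-1} \in \Ha$ (again by Theorem \ref{subgroup}), this gives $g^{-1} \in \mathcal{K}^\alpha_1$. I do not expect any genuine obstacle here: the content of the theorem is essentially the observation that the stabilizer of a fixed object inside a group is a subgroup, and the only care needed is to keep track of the fact that membership in $\Ha$ is supplied for free by the earlier theorem rather than re-derived. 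The one point worth stating explicitly is that the equivalence class containing the identity is distinguished precisely because it corresponds to the \emph{trivial} value $A^{\alpha\beta}$ of the map $g \mapsto g A^{\alpha\beta}$, which is what makes it closed under the group operations, whereas the other classes $\mathcal{K}^\alpha_i$, $i \neq 1$, correspond to nontrivial values and are only cosets.
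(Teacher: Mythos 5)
Your proposal is correct and follows essentially the same route as the paper: both verify the identity, inverse, and closure axioms directly, using the characterization $g \in \mathcal{K}^\alpha_1 \iff g A^{\alpha\beta} = A^{\alpha\beta}$ and the same left-multiplication manipulation for inverses. Your explicit framing of $\mathcal{K}^\alpha_1$ as a stabilizer, and your remark that membership in $\mathcal{H}^\alpha$ is supplied by the earlier theorem, are nice clarifications but do not change the substance of the argument.
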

\begin{proof}

{\em Identity}. The identity $e_\alpha$ is in $\mathcal K^\alpha_1$ by definition.  

{\em Inverse}. Let $ g \in \mathcal K^\alpha_1$ then, $e_\alpha A^{\alpha\beta}=g^{-1}g A^{\alpha\beta}=A^{\alpha\beta}$. But we also have $g A^{\alpha\beta}=A^{\alpha\beta}$, hence, $g^{-1} A^{\alpha\beta}=A^{\alpha\beta}$ so $g^{-1} \in \mathcal K^\alpha_1$.

{\em Closure}. Let $g_1 \text{ and } g_2 \in \mathcal K^\alpha_1$ then, $g_1  g_2  A^{\alpha\beta}= g_1  A^{\alpha\beta}= A^{\alpha\beta} \implies g_1  g_2 \in \mathcal K^\alpha_1$.

Hence, $\mathcal K^\alpha_1$ is a subgroup of $\mathcal {H}^\alpha$.

\end{proof}

The same reasoning holds for the $\beta$ layer, etc.  Next we state an additional property of the subgroups $\mathcal K^\alpha_1$, $\mathcal K^\beta_1$, etc.:

\begin{corollary}
$\mathcal K^\alpha_1$ is a normal subgroup. This means that if $g_1 \in \mathcal K^\alpha_1$, then $\forall g \in \mathcal {H}^\alpha$ we have $ g^{-1} g_1 g \in \mathcal K^\alpha_1$ or, equivalently, $ g^{-1} \mathcal K^\alpha_1 g = \mathcal K^\alpha_1$.
\end{corollary}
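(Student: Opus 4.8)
The plan is to prove normality directly from the definition: for an arbitrary $g_1 \in \mathcal K^\alpha_1$ and an arbitrary $g \in \Ha$, I want to show that the conjugate $g^{-1} g_1 g$ lies again in $\mathcal K^\alpha_1$. By the characterization of the equivalence class containing the identity, membership in $\mathcal K^\alpha_1$ is equivalent to the single matrix identity $(g^{-1} g_1 g)\, A^{\alpha\beta} = A^{\alpha\beta}$, so the entire proof reduces to verifying this one equation.

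First I would record the two ingredients that are already available. Since $\mathcal K^\alpha_1$ is the $\sim$-class of the identity and $e_\alpha A^{\alpha\beta} = A^{\alpha\beta}$, the class is exactly $\{g \in \Ha : g A^{\alpha\beta} = A^{\alpha\beta}\}$; in particular the chosen $g_1$ satisfies $g_1 A^{\alpha\beta} = A^{\alpha\beta}$. Second, because $g \in \Ha$, the conjugacy relation \eqref{eq:conj} supplies some $h \in \Hb$ with $g A^{\alpha\beta} = A^{\alpha\beta} h$, and, exactly by the manipulation used in the inverse step of Theorem \ref{subgroup}, this also yields $g^{-1} A^{\alpha\beta} = A^{\alpha\beta} h^{-1}$.

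The computation then proceeds by sliding the permutations through $A^{\alpha\beta}$ from right to left. Starting from $(g^{-1} g_1 g)\, A^{\alpha\beta}$, I replace $g A^{\alpha\beta}$ by $A^{\alpha\beta} h$; then I use $g_1 A^{\alpha\beta} = A^{\alpha\beta}$ to absorb $g_1$, which leaves $g^{-1} A^{\alpha\beta} h$; finally I replace $g^{-1} A^{\alpha\beta}$ by $A^{\alpha\beta} h^{-1}$, so that the factor $h^{-1} h = e_\beta$ collapses and the expression returns to $A^{\alpha\beta}$. This establishes $(g^{-1} g_1 g)\, A^{\alpha\beta} = A^{\alpha\beta}$, hence $g^{-1} g_1 g \in \mathcal K^\alpha_1$, and therefore $g^{-1} \mathcal K^\alpha_1 g = \mathcal K^\alpha_1$ for every $g \in \Ha$, which is normality. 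The same reasoning applies verbatim to $\mathcal K^\beta_1$ in $\Hb$.

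The argument is essentially bookkeeping, but the point that requires care — the main obstacle — is that $g$ does \emph{not} commute with $A^{\alpha\beta}$, since the interlayer block is rectangular and ordinary group conjugation by cancellation is unavailable. The conjugate partner $h$ is exactly what transports the action of $g$ across the rectangular block; the whole proof hinges on recognizing that $g_1$ acts trivially on $A^{\alpha\beta}$ while $g$ merely exchanges for $h$, so that the $h^{-1} h$ cancellation on the $\beta$ side produces the desired identity.
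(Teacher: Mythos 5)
Your proof is correct and follows essentially the same route as the paper's: both use the characterization $\mathcal K^\alpha_1 = \{g \in \Ha : g A^{\alpha\beta} = A^{\alpha\beta}\}$, the fact (from the inverse step of Theorem \ref{subgroup}) that $g A^{\alpha\beta} = A^{\alpha\beta} h$ implies $g^{-1} A^{\alpha\beta} = A^{\alpha\beta} h^{-1}$, and the identical chain $(g^{-1} g_1 g) A^{\alpha\beta} = g^{-1} g_1 A^{\alpha\beta} h = g^{-1} A^{\alpha\beta} h = A^{\alpha\beta} h^{-1} h = A^{\alpha\beta}$. Your added remark about why conjugation cannot be done by cancellation (the interlayer block is rectangular, so $g$ must be exchanged for its partner $h$ rather than commuted through) is a sound observation that the paper leaves implicit.
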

\begin{proof}
We know $ \forall g \in \mathcal {H}^\alpha$, if $g A^{\alpha\beta}=A^{\alpha\beta} h$, then $g^{-1} A^{\alpha\beta}=A^{\alpha\beta} h^{-1}$. Hence, $\forall g_1 \in \mathcal K^\alpha_1$, 
$g^{-1} g_1 g A^{\alpha\beta}=g^{-1} g_1 A^{\alpha\beta}h=g^{-1} A^{\alpha\beta}h=
A^{\alpha\beta} h^{-1}h=A^{\alpha\beta} \implies g^{-1} g_1 g \in K^\alpha_1$ and $K^\alpha_1$ is a normal subgroup.
\end{proof}

And, finally, we prove the following corollary. 

\begin{corollary}
All the $\mathcal K^\alpha_i$, $i \neq 1$, are left and right cosets of $\mathcal K^\alpha_1$.
\end{corollary}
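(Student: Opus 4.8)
The plan is to show that the equivalence classes $\mathcal{K}^\alpha_i$ are precisely the cosets of $\mathcal{K}^\alpha_1$ in $\mathcal{H}^\alpha$, and then to invoke the normality of $\mathcal{K}^\alpha_1$ (already established in the preceding corollary) to conclude that these cosets are simultaneously left and right cosets. First I would record the characterization that $\mathcal{K}^\alpha_1$ is exactly the set of elements of $\mathcal{H}^\alpha$ that fix $A^{\alpha\beta}$ under left multiplication, i.e. $\mathcal{K}^\alpha_1 = \{ g \in \mathcal{H}^\alpha : g A^{\alpha\beta} = A^{\alpha\beta} \}$. This is immediate from the definition of $\sim$ together with the fact that the identity $e_\alpha$ satisfies $e_\alpha A^{\alpha\beta} = A^{\alpha\beta}$, so that the class of the identity consists exactly of those $g$ with $g A^{\alpha\beta} = e_\alpha A^{\alpha\beta} = A^{\alpha\beta}$.

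The central step is then to translate the equivalence relation into a statement about coset membership. For $g, g' \in \mathcal{H}^\alpha$ I would write the chain
\[
g \sim g' \iff g A^{\alpha\beta} = g' A^{\alpha\beta} \iff (g')^{-1} g\, A^{\alpha\beta} = A^{\alpha\beta} \iff (g')^{-1} g \in \mathcal{K}^\alpha_1 \iff g \in g' \mathcal{K}^\alpha_1,
\]
where the second equivalence uses that $g'$ is an invertible permutation matrix, and the third uses the characterization of $\mathcal{K}^\alpha_1$ from the first paragraph. This shows that the equivalence class of any $g'$ is exactly the left coset $g' \mathcal{K}^\alpha_1$; in particular, fixing any representative $g_i \in \mathcal{K}^\alpha_i$ we obtain $\mathcal{K}^\alpha_i = g_i \mathcal{K}^\alpha_1$, so every equivalence class is a left coset of $\mathcal{K}^\alpha_1$.

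Finally I would invoke the normality of $\mathcal{K}^\alpha_1$ in $\mathcal{H}^\alpha$, proved in the previous corollary. Normality gives $g_i \mathcal{K}^\alpha_1 = \mathcal{K}^\alpha_1 g_i$ for every $g_i \in \mathcal{H}^\alpha$, so the left coset representing $\mathcal{K}^\alpha_i$ coincides with the right coset $\mathcal{K}^\alpha_1 g_i$. Hence each $\mathcal{K}^\alpha_i$ with $i \neq 1$ is simultaneously a left and a right coset of $\mathcal{K}^\alpha_1$, which completes the proof; the identical argument applies to the $\beta$ layer and to the remaining layers.

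I expect the only delicate point to be the bookkeeping of left versus right multiplication: one must consistently left-multiply by $(g')^{-1}$ so that the action on the rectangular matrix $A^{\alpha\beta}\in\R^{N^\alpha\times N^\beta}$ remains well-defined, and verify that the coset arising naturally from $\sim$ is the left coset $g'\mathcal{K}^\alpha_1$ rather than a right coset. Everything else is routine, and normality is precisely what erases the left/right distinction, so that no separate argument is needed to establish the right-coset claim.
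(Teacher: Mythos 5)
Your proof is correct, but it takes a genuinely different route from the paper's. The paper argues directly from the conjugacy relations: taking $g \in \mathcal K^\alpha_i$ with $gA^{\alpha\beta}=A^{\alpha\beta}h$ and $g_1 \in \mathcal K^\alpha_1$, it verifies $g g_1 A^{\alpha\beta}= g A^{\alpha\beta}$ and $g_1 g A^{\alpha\beta}= g A^{\alpha\beta}$, giving both containments $g\mathcal K^\alpha_1 \subseteq \mathcal K^\alpha_i$ and $\mathcal K^\alpha_1 g \subseteq \mathcal K^\alpha_i$, and then concludes equality by invoking cancellation (``group products are unique''); normality is never used, and indeed the paper's normality corollary is stated before this one but plays no role in its proof. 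You instead prove the textbook equivalence $g \sim g' \iff (g')^{-1}g \in \mathcal K^\alpha_1 \iff g \in g'\mathcal K^\alpha_1$, which yields the left-coset identity with both inclusions at once, and then you lean on the normality corollary to convert left cosets into right cosets. Each approach buys something: the paper's version establishes the right-coset property independently of normality (so the two corollaries stand on their own), whereas your biconditional chain is tighter at the one point where the paper is terse --- the reverse inclusion $\mathcal K^\alpha_i \subseteq g\mathcal K^\alpha_1$, which the paper's cancellation remark does not by itself supply (one still needs $g^{-1}g' \in \mathcal K^\alpha_1$ for $g' \in \mathcal K^\alpha_i$, exactly the computation your chain packages automatically). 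One small point to make explicit in your argument: concluding $(g')^{-1}g \in \mathcal K^\alpha_1$ requires $(g')^{-1}g \in \mathcal H^\alpha$, which is where Theorem~1 (that $\mathcal H^\alpha$ is a subgroup) enters; both proofs also deliver, as the paper notes, that all the classes $\mathcal K^\alpha_i$ have the same cardinality.
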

\begin{proof}
If $g_1 \in \mathcal K^\alpha_1$ and $g \in \mathcal K^\alpha_i$ with $gA^{\alpha\beta}=A^{\alpha\beta}h$ , then $g g_1 A^{\alpha\beta}= g A^{\alpha\beta}=A^{\alpha\beta}h \implies g g_1 \in \mathcal K^\alpha_i$. Similarly we can prove $g_1 g \in \mathcal K^\alpha_i$. Since group products are unique, i.e. $g g_1 = g g_2 \implies g_1 = g_2$, we have $g \mathcal K^\alpha_1= \mathcal K^\alpha_i= \mathcal K^\alpha_i g$. So , $\mathcal K^\alpha_i$ are left and right cosets of $\mathcal K^\alpha_1$. This also means all the $\mathcal K^\alpha_i$ have the same number of elements.
\end{proof}

\subsection*{Properties of the Equivalence Classes and structure of the matrix $T$}

Using the equivalence classes of each layer we can show the following is true:  the $T$ matrix for the entire multilayer system is of a block diagonal form:

\begin{equation}\label{eq:TblockMatrix}
  T = 
  \begin{pmatrix}
    T^\alpha & 0 \\
    0 & T^\beta \\
  \end{pmatrix}
\end{equation}
Can we find the matrices of each block independently?  This may seem intuitively obvious, but it is not obvious that the arithmetic will work out. Consider the two layer system $\alpha$ and $\beta$ as above. The group $\mathcal G$ of the full system is given by the union of direct products such as $\mathcal K^\alpha_i \times \mathcal K^\beta_i$. If $n_\alpha$ is the number of elements in $\mathcal K^\alpha_i$ and $n_\beta$ is the number of elements in $\mathcal K^\beta_i$, then the number of elements in an equivalence class direct product is $n_\alpha n_\beta$. And if $K$ is the number of equivalence classes, then the number of terms in the whole direct product is $K n_\alpha n_\beta$. 

In the calculation of the $T$ matrix for the entire system we form the projection operators $P^{(l)}$ for each of the IRR labeled by $l$ using the sums \cite{pecora2014cluster}

\begin{equation}
P^{(l)}= \frac{d^{(l)}}{d} \sum_{\cal C} \mu^{(l)}_{\cal C} \sum_{g \in \cal C} g
\end{equation}
where $\cal C$ is a conjugacy class, $\mu^{(l)}_{\cal C}$ is the character of that class for the $l$th IRR, $d^{(l)}$ is the dimension of the $l$th IRR and $d$ is the order (size) of the group. For the multilayer system $d=K n_\alpha n_\beta$. The elements of the $\beta$ level group appear in the sum $n_\alpha$ times and the elements of the $\alpha$ level group appear in the sum $n_\beta$ times. This will contribute factors to the sum for each layer. However, because $d=K n_\alpha n_\beta$ is in the denominator  the extra factors in the numerator will be cancelled in each layer leaving the correct dimension divisor for each layer's sum. Hence, we can find the $T$ matrix for the whole system by finding the $T$ matrices ($T^\alpha$ and $T^\beta$) for each layer independently and putting them into the block form to construct $T$.

\subsection*{Computing the symmetry group of a simple multilayer network}\label{simpleExample}

Let us consider the simple example of a two-layer undirected network shown in Fig.\ \ref{fig:twolayer1}. 
We show in detail the calculations to determine the final group $\mathcal G$ of the full network.

The group of symmetries of the two layers are (written in cyclic notation) 

\begin{eqnarray}
\Ga&=&\{(),(1,2)(3,4),(2,4),(1,4,3,2),(1,2,3,4),(1,3),(1,4)(2,3),(1,3)(2,4)\},\\
\Gb&=&\{(),(1,3)\},
\end{eqnarray}

\noindent where, for example, (1,4)(2,3) means move node 1 to node 4 (and 4 to 1) and 3 to 2 (and 2 to 3). Also, (1,4,3,2) means move 1 to 4, 4 to 3, 3 to 2, and 2 to 1.

From these we determine $\Ha$ and $\Hb$ using the results of Theorem 1. To find $\Ha$, we take all the  permutations $h \in \Gb$ and we check  which  of  the  permutations  in $\Ga$ respect  the  conjugacy  relation \eqref{eq:conj}. 
We obtain 
\begin{eqnarray}
\Ha&=&\{(),(1,2)(3,4),(1,4)(2,3),(1,3)(2,4)\},\\
\Hb&=&\{(),(1,3)\},
\end{eqnarray}

Now we must define the $\sim$ relation. Applying the permutations in $\Ha$ to $A^{\alpha\beta}$ we obtain
\[
()\begin{bmatrix}
    1 & 0 & 0 \\
    1 & 0 & 0 \\
    0 & 0 & 1 \\
    0 & 0 & 1 \\
    0 & 0 & 0
    \end{bmatrix} = 
    (1,2)(3,4) \begin{bmatrix}
    1 & 0 & 0 \\
    1 & 0 & 0 \\
    0 & 0 & 1 \\
    0 & 0 & 1 \\
    0 & 0 & 0
    \end{bmatrix} =
    \begin{bmatrix}
    1 & 0 & 0 \\
    1 & 0 & 0 \\
    0 & 0 & 1 \\
    0 & 0 & 1 \\
    0 & 0 & 0
    \end{bmatrix} \quad \Longrightarrow \quad
    \begin{array}{c}
    () \sim (1,2)(3,4)\\[2mm]
    \mathcal K_1^\alpha = \{(), (1,2)(3,4)\}
    \end{array}
\]
\[
    \qquad
    (1,4)(2,3)\begin{bmatrix}
    1 & 0 & 0 \\
    1 & 0 & 0 \\
    0 & 0 & 1 \\
    0 & 0 & 1 \\
    0 & 0 & 0
    \end{bmatrix}=(1,3)(2,4)\begin{bmatrix}
    1 & 0 & 0 \\
    1 & 0 & 0 \\
    0 & 0 & 1 \\
    0 & 0 & 1 \\
    0 & 0 & 0
    \end{bmatrix} = 
    \begin{bmatrix}
    0 & 0 & 1 \\
    0 & 0 & 1 \\
    1 & 0 & 0 \\
    1 & 0 & 0 \\
    0 & 0 & 0
    \end{bmatrix} \quad\Longrightarrow\quad 
    \begin{array}{c}
   (1,4)(2,3) \sim (1,3)(2,4)\\[2mm]
    \mathcal K_2^\alpha = \{(1,4)(2,3), (1,3)(2,4)\}
    \end{array}
\]
while applying the permutation of $\Hb$ to $A^{\alpha\beta}$ and recalling the conjugacy relation $g A^{\alpha\beta} = A^{\alpha\beta} h$ we obtain
\[
\begin{bmatrix}
    1 & 0 & 0 \\
    1 & 0 & 0 \\
    0 & 0 & 1 \\
    0 & 0 & 1 \\
    0 & 0 & 0
    \end{bmatrix} () = \begin{bmatrix}
    1 & 0 & 0 \\
    1 & 0 & 0 \\
    0 & 0 & 1 \\
    0 & 0 & 1 \\
    0 & 0 & 0
    \end{bmatrix} \quad \Longrightarrow \quad
    \mathcal K_1^\beta = \{()\}
\]
\[
\begin{bmatrix}
    1 & 0 & 0 \\
    1 & 0 & 0 \\
    0 & 0 & 1 \\
    0 & 0 & 1 \\
    0 & 0 & 0
    \end{bmatrix} (1,3) = \begin{bmatrix}
    0 & 0 & 1 \\
    0 & 0 & 1 \\
    1 & 0 & 0 \\
    1 & 0 & 0 \\
    0 & 0 & 0
    \end{bmatrix} \quad \Longrightarrow \quad
    \mathcal K_2^\beta = \{(1,3)\}.
\]

Combining the permutations from each pair of disjoint subsets as in Eq. \eqref{eq:2Dblockdiag}, we obtain the full group,
\begin{equation} \label{eq:Eg1Group}
\mathcal G=
\biggl\{
\begin{pmatrix}
()&0 \\
0&()
\end{pmatrix},
\begin{pmatrix}
(1,2)(3,4)&0 \\
0&()
\end{pmatrix},
\begin{pmatrix}
(1,4)(3,2)&0 \\
0&(1,3)
\end{pmatrix},
\begin{pmatrix}
(1,3)(2,4)&0 \\
0&(1,3)
\end{pmatrix}
\biggr\}.
\end{equation}

Since $\mathcal K_1^\alpha$ and $\mathcal K_1^\beta$ are the subgroups of $\mathcal H^\alpha$ and $\mathcal H^\beta$, respectively, this means that the separate clusters in the $\alpha$ layer, $[1,2]$ and $[3,4]$, are ILS clusters and the following cluster bifurcations cause no bifurcations in the possible $\beta$ dynamics or synchronous clusters (again, stability is a separate issue):

\begin{equation} \label{eq:ghindeplayersym}
 [1,2,3,4] \rightarrow [1,3] \text{ and } [2,4],
\end{equation}
or
\begin{equation} \label{eq:ghindeplayersym}
 [1,2,3,4] \rightarrow [1,4] \text{ and } [2,3],
\end{equation}
which would still allow [1,3] to be synchronous in the $\beta$ layer.
However, the bifurcation 
\begin{equation} \label{eq:ghindeplayersym}
 [1,2][3,4] \rightarrow [1], [2], [3],  \text{ and } [4]
\end{equation}
would break a symmetry in the coset $\mathcal K_2^\alpha$ if the nodes  $1$ and $3$ were synchronized. But if they are not synchronized, then the system is operating in a subgroup of the original group and only the identity would remain for the $\beta$ layer. This means in this case $1$ and $3$ are each in their own (trivial) cluster already in the $\beta$ layer and that does not prevent either state $[1,2][3,4]$ or $[1], [2], [3], [4]$ in the $\alpha$ layer.  These cases can be easily seen from Fig.\ \ref{fig:twolayer1} since node $1$ in the $\beta$ layer only depends on the sum of nodes $1$ and $2$ in the $\alpha$ layer.  And similarly for the dependence of node $3$ in the $\beta$ layer on nodes $3$ and $4$ in the $\alpha$ layer. However, while these ILS relations are easy to see in this simple case, it would require the calculation of $\mathcal K_1^\alpha$ and $\mathcal K_1^\beta$ for more complicated networks to find the ILS's, associated clusters, and their allowed bifurcations.

\subsection*{Symmetries of multiplex and multidimensional networks}

Here we describe how to compute the group of symmetries of multiplex and multidimensional networks, which are special cases of the general multilayer network problem presented in the Results Section. We show that the problems of computing the group of symmetries of a multiplex network and of a multidimensional network are closely related to each other. In what follows, we start by considering the problem of computing the symmetry group of a multiplex network and mapping it to the problem of computing the symmetry group of a multidimensional network.

Multiplex networks \cite{verbrugge1979multiplexity,sola2013eigenvector,gomez2013diffusion} are a particular class of multilayer networks;
in multiplex networks, different features of the same set of agents are described in each different layer (\eg in a social system, each layer represents the opinion of a person on a different topic, and links capture how the different social interactions influence the person thinking on each topic).
A multiplex network is thus formed of several layers, with each layer 
containing the same number of nodes $\mathcal{N}$. 
Moreover, since interlayer coupling only occurs between node $i$ in a given layer and the same node $i$ in a different layer,  $A^{\alpha\beta}=\mathbb{I}_\mathcal{N}$, where $\mathbb{I}_\mathcal{N}$ is the $\mathcal{N}$-dimensional identity matrix. 

A general multiplex network is governed by the following set of equations \cite{tang2017master,Leyva2017},
\begin{equation}
\dot{\bf x}_i^\alpha={\bf F}^\alpha({\bf x}_i^\alpha)+ \sigma^{\alpha\alpha} \sum_{j=1}^\mathcal{N} A_{ij}^{\alpha\alpha} {\bf H}^{\alpha\alpha}({\bf x}_j^\alpha)+ \sum_{\beta \neq \alpha} \sigma^{\alpha\beta} {\bf H}^{\alpha\beta}({\bf x}_i^\beta),  
\label{eq: multiplex eqn}
\end{equation}
$i=1,...,\mathcal{N}$, where the matrix $A^{\alpha\alpha}$ represents the intra-layer connectivity, the function ${\bf F}^{\alpha}$ represents the intrinsic dynamics of each node inside a layer  and the functions ${\bf H}^{\alpha\alpha}$ and ${\bf H}^{\alpha\beta}$ represent the form of the intra- and inter-layer coupling, respectively. 

We reduce the set of Eqs.\ \eqref{eq: multiplex eqn} to a much more compact form and use it to compute the symmetries of a multiplex network.
Introducing
\begin{equation}
{\bf x}_i=\begin{bmatrix}
    {\bf x}_i^\alpha \\ {\bf x}_i^\beta \\ \vdots
\end{bmatrix}, \quad
{\bf F}({\bf x}_i)= 
 \begin{bmatrix}
   {\bf F}^\alpha({\bf x}_i^\alpha) + \sum_{\lambda \neq \alpha} \sigma^{\alpha \lambda} {\bf H}^{\alpha\lambda}({\bf x}_i^\lambda) \\
   {\bf F}^\beta({\bf x}_i^\beta) + \sum_{\lambda \neq \beta} \sigma^{\beta \lambda} {\bf H}^{\beta\lambda}({\bf x}_i^\lambda) \\
   \vdots  
  \end{bmatrix}, \quad
{\bf H}^{\lambda}({\bf x}_i)= 
 \begin{bmatrix}
   \delta_{\lambda\alpha} {\bf H}^{\alpha\alpha}({\bf x}_i^\alpha)\\
   \delta_{\lambda\beta} {\bf H}^{\beta\beta}({\bf x}_i^\beta) \\
   \vdots \\
  \end{bmatrix}, \quad
  \begin{array}{c}
  \sigma^{\lambda}=\sigma^{\lambda\lambda} \\
  A^\lambda = A^{\lambda\lambda}
 \end{array}
\end{equation}
where $\lambda\in\{\alpha,\beta,\ldots\}$, $\delta_{\alpha \lambda}$ is the Kronecker delta (e.g., $\delta_{\lambda\alpha}=1$ if $\lambda=\alpha$, 0 otherwise) we can rewrite Eq.\ \eqref{eq: multiplex eqn} as
\begin{equation} \label{eq:CompactEqns_SL}
    \dot {\bf x} = {\bf F}({\bf x}) +
    \sum_{\lambda=1}^{\Lambda} \sigma^{\lambda}
     A^{\lambda} {\bf H}^{\lambda} ({\bf x}),
\end{equation}
which is the equation of a {multidimensional network} \cite{berlingerio2013multidimensional,coscia2013you}, i.e., a network with only one layer but different types of interactions. Each interaction type is described by a different adjacency matrix $A^\lambda$, $\lambda=1,...,\Lambda$.  Blaha et al. \cite{blaha2019cluster} recently presented a simple experimental realization of such a network. 

We have shown the equivalence between a multiplex network and a multidimensional  network. We now discuss how to compute the symmetries of the multidimensional network \eqref{eq:CompactEqns_SL}, both analytically and computationally.
The analytic approach gives insight into the origins of the final symmetry permutation group. 
The software approach allows a direct calculation of the full group, but without the insights into its structure.

To analytically define the group of symmetries of the multidimensional network, we introduce $\mathcal{G}^{\lambda}$ as the group of symmetries for each interaction type $\lambda$. 
Each element of the group $\mathcal{G}^{\lambda}$ can be represented by a permutation matrix $\Pi$ that commutes with $A^{\lambda}$, $\Pi A^{\lambda}=A^{\lambda} \Pi$.  
Then, the symmetry group of the whole multidimensional network $\mathcal{G}$ is  given by $\mathcal{G}^{1} \cap \mathcal{G}^{2} ... \cap \mathcal{G}^{\Lambda} $, which is the largest common subgroup of the $\Lambda$ symmetry groups $\{\mathcal{G}^{\lambda}\}$. 

Computationally, the group of symmetries of the multidimensional network $\mathcal{G}$ can be found using available computational group theory tools, like GAP or SAGE \cite{gap2008,sage2006}. 
This software computes the group of symmetries of a labeled graph \cite{mckay1981practical,mckay2014practical};
a multidimensional network can easily be remapped to a labeled graph. 
To remap the network, we define the labeled adjacency matrix $A^{\text{lab}}$.
Matrix entries $A_{ij}^{\text{lab}}$ are defined by how pairs of nodes $i$ and $j$ interact.
For each pair of nodes, one of three cases occurs: 
(i) there is no interaction between $i$ and $j$, 
(ii) there is one type of weighted interaction between $i$ and $j$ and 
(iii) there are two or more types of weighted interactions between $i$ and $j$.
If there is no interaction between node $i$ and node $j$ (case i), then $A_{ij}^{\text{lab}}=0$. 
We represent a single edge (case ii) with a $2$-tuple $(\tau,w)$, where the integer $\tau$ is the edge type and the real number $w$ is the edge weight.
We represent a multiple edge formed by $q$ interactions (case iii) with a $2q$-tuple, $(\tau_1,w_1,\tau_2, w_2,,...,\tau_q,w_q)$, $\tau_1<\tau_2<...<\tau_q$, where each pair $(\tau_i,w_i)$ represents an interaction type and the associated weight. 
We can partition the set of the network edges into $Z$ subsets of edges that are all represented by the same tuple. 
Then the entries of the matrix $A^{\text{lab}}$ are such that $A^{\text{lab}}_{ij}=0$ if there is no interaction between nodes $i$ and $j$ and $A^{\text{lab}}_{ij}=z$, $z=1,...,Z$ if the edge between nodes $i$ and $j$ belongs to the subset $z$. 
This forces the software to consider only permutations that involve edges of the same type and of the same weight, i.e., permutation matrices $\Pi$ that commute with  $A^{\text{lab}}$, $\Pi A^{\text{lab}}=A^{\text{lab}} \Pi$. 
As a result, the computed permutations $\Pi$ that form $\mathcal G^\lambda$  also commute with all the adjacency matrices of the multidimensional network $A^{1}, \ldots, A^{\Lambda}$: $\mathcal G$ is thus the largest common subgroup of the $\Lambda$ symmetry groups.


\subsection*{A network with three layers}
\label{sec:threelayers}

In the Results Section we have considered multilayer networks with not more than two layers and only one type of inter-layer coupling. Here we show how our procedure can be generalized to  a multilayer netwotk with $M=3$ layers. Later we will consider the case of any number $M$ of layers.
For the case $M=3$,  the vector field is
\begin{equation}
\begin{array}{rl}
	\dot{\bx}^\alpha &= {\bf F}^\alpha(\bx^\alpha) + \sigma^{\alpha\alpha} A^{\alpha \alpha} \bH^{\alpha\alpha} (\bx^\alpha) 
	+ \sigma^{\alpha \beta} A^{\alpha \beta} \bH^{\alpha \beta} (\bx^\beta)
	+ \sigma^{\alpha \gamma} A^{\alpha \gamma} \bH^{\alpha \gamma} (\bx^\gamma)
	\\
	\dot\bx^\beta  &= {\bf F}^\beta(\bx^\beta) + \sigma^{\beta\beta} A^{\beta \beta} \bH^{\beta\beta} (\bx^\beta) 
	+ \sigma^{\beta \alpha} A^{\beta \alpha} \bH^{\beta \alpha} (\bx^\alpha)
	+ \sigma^{\beta \gamma} A^{\beta \gamma} \bH^{\beta \gamma} (\bx^\gamma)
	\\
	\dot\bx^\gamma  &= {\bf F}^\gamma(\bx^\gamma) + \sigma^{\gamma\gamma} A^{\gamma \gamma} \bH^{\gamma\gamma} (\bx^\gamma) 
	+ \sigma^{\gamma \alpha} A^{\gamma \alpha} \bH^{\gamma \alpha} (\bx^\alpha)
	+ \sigma^{\gamma \beta} A^{\gamma \beta} \bH^{\gamma \beta} (\bx^\beta)	
\end{array}
\end{equation}
If we apply three permutations $g \in \Ga$, $h \in \Gb$, and $k \in \mathcal{G}^\gamma$ on the system we obtain
\begin{equation}
\begin{array}{rl}
	g\dot{\bx}^\alpha &= {\bf F} ^\alpha(g\bx^\alpha) + \sigma^{\alpha\alpha} A^{\alpha \alpha} \bH^{\alpha\alpha} (g\bx^\alpha) 
	+ \sigma^{\alpha \beta} gA^{\alpha \beta} \bH^{\alpha \beta} (\bx^\beta)
	+ \sigma^{\alpha \gamma} gA^{\alpha \gamma} \bH^{\alpha \gamma} (\bx^\gamma)
	\\
	h\dot\bx^\beta  &= {\bf F}^\beta(h\bx^\beta) + \sigma^{\beta\beta} A^{\beta \beta} \bH^{\beta\beta} (h\bx^\beta) 
	+ \sigma^{\beta \alpha} hA^{\beta \alpha} \bH^{\beta \alpha} (\bx^\alpha)
	+ \sigma^{\beta \gamma} hA^{\beta \gamma} \bH^{\beta \gamma} (\bx^\gamma)
	\\
	k\dot\bx^\gamma  &= {\bf F}^\gamma(k\bx^\gamma) + \sigma^{\gamma\gamma} A^{\gamma \gamma} \bH^{\gamma\gamma} (k\bx^\gamma) 
	+ \sigma^{\gamma \alpha} k A^{\gamma \alpha} \bH^{\gamma \alpha} (\bx^\alpha)
	+ \sigma^{\gamma \beta} k A^{\gamma \beta} \bH^{\gamma \beta} (\bx^\beta).
\end{array}
\end{equation}

For a directed graph, we have 6 conjugacy relationships to be satisfied:
\[
gA^{\alpha \beta} = A^{\alpha \beta} h,\ gA^{\alpha \gamma} = A^{\alpha \gamma}k,\ hA^{\beta \alpha}=A^{\beta \alpha}g,\ hA^{\beta \gamma}=A^{\beta \gamma}k,\  k A^{\gamma \alpha} = A^{\gamma \alpha}g,\text{ and } k A^{\gamma \beta} = A^{\gamma \beta}h.
\]
If the graph is undirected, then half are redundant (\eg $gA^{\alpha \beta} = A^{\alpha \beta} h$ is the same relationship as $hA^{\beta \alpha}=A^{\beta \alpha}g$).
From these relationships we can define
\begin{equation}\label{eq:HaHbHc3lyrs}
\begin{array}{rcl}
\mathcal{H}^\alpha &=
\left\{ g \in \mathcal{G}^\alpha | \exists h \in \mathcal{G}^\beta\mbox{ and }k \in \mathcal{G}^\gamma :g A^{\alpha \beta} = A^{\alpha \beta }h \mbox{ and }   g A^{\alpha \gamma} = A^{\alpha \gamma}k \right\}, \\
\mathcal{H}^\beta &= \left\{ h \in \mathcal{G}^\beta | \exists g \in \mathcal{G}^\alpha \mbox{ and  } k \in \mathcal{G}^\gamma:hA^{\beta \alpha}=A^{\beta \alpha}g \mbox{ and } hA^{\beta \gamma}=A^{\beta \gamma}k \right\},\\
\mathcal{H}^\gamma &= \left\{ k \in \mathcal{G}^\gamma | \exists g \in \mathcal{G}^\alpha \mbox{ and }h \in \mathcal{G}^\beta:
 k A^{\gamma \alpha} = A^{\gamma \alpha}g
\mbox{ and } k A^{\gamma \beta} = A^{\gamma \beta}h \right\}.
\end{array}
\end{equation}
Using the same reasoning as in Theorem \ref{subgroup}, each subset in Eqs.~\eqref{eq:HaHbHc3lyrs} is a subgroup of its layer's group. 
The equivalence relationship $\sim$ is
now  defined by a set of two equations that must be satisfied, i.e.,
\[
g_1 \sim g_2 \quad\Longleftrightarrow\quad g_1 A^{\alpha\beta} =g_2 A^{\alpha\beta} \mbox{ and } g_1 A^{\alpha\gamma} =g_2 A^{\alpha\gamma} 
\]
thus defining
\begin{equation}
    \begin{array}{lcr}
\mathcal{K}^\alpha_{ij} &=& \{g \in \Ha: g A^{\alpha\beta}=M_i,\ g A^{\alpha\gamma}=M_j\},\\
\mathcal{K}^\beta_{ij} &=& \{h \in \Hb: h A^{\beta\alpha}=M_i,\ h A^{\beta\gamma}=M_j\},\\
\mathcal{K}^\gamma_{ij} &=& \{k \in \mathcal{H}^\gamma: k A^{\gamma\alpha}=M_i,\ k A^{\gamma\beta}=M_j\}.
\end{array}
\end{equation}

There are $K_1K_2$ disjoint sets for each subgroup $\mathcal{H}$, where
$K_1$ is the number of different $M_i$ and
$K_2$ is the number of different $M_j$ that can be obtained applying any of the compatible permutations.  
Finally, we form the final symmetry group of the multilayer network using the disjoint sets as in Eq.~\eqref{eq:2DGroup}, namely,
\begin{equation} \label{eq:3DGroup}
\mathcal G=\left \{
\begin{pmatrix}
g&0&0 \\
0&h&0 \\
0&0&k
\end{pmatrix}
\Big |  g \in \mathcal K^{\alpha}_{ij} \mbox{ , } h \in \mathcal K^{\beta}_{ij},  \mbox{ and } k \in \mathcal K^{\gamma}_{ij}\mbox{ for } i=1,...,K_1, j=1,...,K_2
\right \}.
\end{equation}

{As before with two layers the ILSs for this three-layer case will be in each of the layers' equivalence class subgroups, $\mathcal{K}^\alpha_{ij}$, $\mathcal{K}^\beta_{ij}$, and $\mathcal{K}^\gamma_{ij}$.}

\subsection*{Two interlayer coupling types}
\label{sec:twocouplings}
For the case of two or more coupling types between two layers, the extension is similar to the previous section. 
The general equation in this case is
\begin{equation}
    \begin{array}{lcr}
	\dot{\bx}^\alpha &=& {\bf F}^\alpha(\bx^\alpha) + \sigma^{\alpha\alpha} A^{\alpha \alpha} \bH^{\alpha\alpha} (\bx^\alpha) 
	+ \sigma^{\alpha \beta,1} A^{\alpha \beta,1} \bH^{\alpha \beta,1} (\bx^\beta)
	+ \sigma^{\alpha \beta,2} A^{\alpha \beta,2} \bH^{\alpha \beta,2} (\bx^\beta)
	\\
	\dot\bx^\beta  &=& {\bf F}^\beta(\bx^\beta) + \sigma^{\beta\beta} A^{\beta \beta} \bH^{\beta\beta} (\bx^\beta) 
	+ \sigma^{\beta \alpha,1} A^{\beta \alpha,1} \bH^{\beta \alpha,1} (\bx^\alpha)
	+ \sigma^{\beta \alpha,2} A^{\beta \alpha,2} \bH^{\beta \alpha,2} (\bx^\alpha).
\end{array}
\end{equation}

Applying two permutations $g\in\Ga$ and $h\in\Gb$ to the system, we obtain 4 conjugacy relationships to be satisfied:
\[
g A^{\alpha \beta,1} = A^{\alpha \beta,1} h,\quad
g A^{\alpha \beta,2} = A^{\alpha \beta,2} h,\quad
h A^{\beta \alpha,1} = A^{\beta \alpha,1} g,\quad
h A^{\beta \alpha,2} = A^{\beta \alpha,2} g,
\], 
 two of which are redundant for undirected graphs. 
The $\mathcal{H}$ groups are still defined by two of these relationships
\begin{equation}
\begin{array}{rcl}
\mathcal{H}^\alpha &=&
\left\{ g \in \mathcal{G}^\alpha | \exists h \in \mathcal{G}^\beta:g A^{\alpha \beta,1} = A^{\alpha \beta,1}h \mbox{ and }   g A^{\alpha \beta,2} = A^{\alpha \beta,2}h \right\}, \\
\mathcal{H}^\beta &=& \left\{ h \in \mathcal{G}^\beta | \exists g \in \mathcal{G}^\alpha: hA^{\beta \alpha,1}=A^{\beta \alpha,1}g \mbox{ and } hA^{\beta \alpha,2}=A^{\beta \alpha,2}g \right\},
\end{array}
\end{equation}
and $g_1\sim g_2$ only when both produce the same left-hand-side for all conjugacy relationships.  
The disjoint $\mathcal{K}$ sets for each subgroup are generated as before (with two indices, since there are two constraints that define the $\sim$ relationship) and the final symmetry group of the multilayer network using the disjoint sets is as in Eq.~\eqref{eq:2DGroup}. The ILSs are
found as above.

\subsection*{Any number of layers and  inter-layer couplings}

Here we present the general case of any number of layers and any number of inter-layer couplings.
Extrapolating from our discussion above, we can define the group of symmetries of a multilayer network with any number of layers and types of coupling.
The number of conjugacy relationships to satisfy grows as the number of unique types of inter-layer coupling grows; 
stated plainly, if we have $\mathcal{A}$ inter-layer adjacency matrices, we have $\mathcal{A}$ conjugacy relationships.
In the case of three layers discussed earlier, there are six such inter-layer couplings; in the case of two layers and two types of interlayer couplings, there are four.
The conjugacy relationships must then be divided in the various layers to define the $\mathcal{H}$ groups, and two permutations in each of the $\mathcal{H}$ groups have the equivalence relationship $\sim$ when they both give the same result in all the left hand side of the $\mathcal{H}$ defining conjugacy relations.
Consequently, the computational effort grows with the number of inter-layer couplings as $O(\mathcal{A})$.


\subsection*{Computation of the symmetry group}


We have already discussed the computation of the symmetry group for the case of multiplex networks and multidimensional networks.
Here we briefly review the computation of the symmetry group for the case of general multilayer networks. 
In the presence of both nodes of different types and edges of different types, a symmetry is only allowed: 
(i) when the edges are interchangeable as discussed before for the case of multidimensional networks and 
(ii) when the nodes that are exchanged are of the same type. 
Available computational group theory software \cite{gap2008,sage2006} allows us to handle both aspects, since they also manage multi-partite labeled graph \cite{mckay1981practical,mckay2014practical}.

To satisfy requirement (i), we provide the software with a labeled supra-adjacency matrix $A^{\text{lab}}$ for the multilayer network, which we define in a similar way as for the previously discussed case of the multidimensional network. 
First, we sequentially renumber all the nodes starting from layer $\alpha$ (so nodes $1,\ldots,N^\alpha$ are the nodes in layer $\alpha$, nodes $N^\alpha+1,\ldots,N^\alpha+N^\beta$ are the nodes in layer $\beta$, and so on).
Then we write the matrix  $A^{\text{lab}}$ that describes the connections between the nodes of the multilayer network: 
$A^{\text{lab}}_{ij}=0$ if there is no (intralayer or interlayer) connections between $i$ and $j$.  
If they are connected, $A^{\text{lab}}_{ij}$  is a tuple as described earlier.

To satisfy requirement (ii), we restrict the search of symmetries to permutations that only involve nodes of the same type (from the same layer).
We thus provide the partition $\mathcal{P}=\big\{ \{1,\ldots,N^\alpha\},\{N^\alpha+1,\ldots,N^\alpha+N^\beta\},\ldots\big\}$ that describes how the renumbered nodes are split between the respective layers. 
This forces the software to consider only permutations that involve nodes of the same type (within the same subset of the partition.)

Finding the group of symmetries of the multilayer network thus reduces to finding  the group of symmetries of the network described by the labeled supra-adjacency matrix $A^{\text{lab}}$, $\Pi A^{\text{lab}}=A^{\text{lab}} \Pi$, with a predefined partition $\mathcal{P}$ of the nodes that are allowed to swap with one another. 

\subsection*{Mathematical model of the experimental system}

Applying Kirchhoff's laws on the circuit, we can write down the following set of differential equations that govern the system's dynamics (we color the coupling terms as in Figure \ref{fig:circuit_mul}): 
\begin{eqnarray*}
\text{Jumper equations:}\nonumber\\
\left(L_{J1}+L_{J2}\right)\dot{I}_J&=&v_{J}^0-V_{J}-I_J R_J\textcolor{orange}{-k_{FJ} L_{F1,J1} \dot{I}_{F,1} -k_{FJ} L_{F2,J2} \dot{I}_{F,2}}\\
C_J \dot{V}_{J}&=&I_J\\
\text{FHN 1:}\nonumber\\
L_{F,1} \dot{I}_{F,1}&=&V_{F,1}-R_{3,1} I_{F,1}\textcolor{orange}{-k_{FJ}L_{F1,J1}\dot{I}_J} \\
C_{F,1} \dot{V}_{F,1}&=&-f\left[V_{F,1}\right]-I_{F,1} \textcolor{blue}{+\frac{1}{R_C}\left[\left(V_{c e,1}-V_{b e,1}-V_{F,1}\right)+\left(V_{c e,2}-V_{b e,2}-V_{F,1}\right)\right]}\\
\text{FHN 2:}\nonumber\\
L_{F,2} \dot{I}_{F,2}&=&V_{F,2}-R_{3,2} I_{F,2}\textcolor{orange}{-k_{FJ}L_{F2,J2}\dot{I}_J} \\
C_{F,2} \dot{V}_{F,2}&=&-f\left[V_{F,2}\right]-I_{F,2} \textcolor{blue}{+\frac{1}{R_C}\left[\left(V_{c e,3}-V_{b e,3}-V_{F,2}\right)+\left(V_{c e,4}-V_{b e,4}-V_{F,2}\right)\right]}\\
\text{Colpitts 1:}\nonumber\\
C_{ce,1} \dot{V}_{c e,1}&=&I_{L,1}-I_c\left[V_{be,1}\right]\textcolor{blue}{+\frac{1}{R_C}\left(V_{F,1}-\left(V_{ce,1}-V_{be,1}\right)\right)}\\
C_{be,1}\dot{V}_{be,1}&=&-\frac{\left(V_{ee}+V_{be,1}\right)}{R_{ee,1}}-I_b\left[V_{be,1}\right]-I_{L,1}\textcolor{blue}{-\frac{1}{R_C}\left[V_{F,1}-\left(V_{ce,1}-V_{be,1}\right)\right]}\\
L_C\dot{I}_{L,1}&=&V_{cc}-V_{ce,1}+V_{be,1}-R_{L,1}I_{L,1}\textcolor{red}{-k_C L_C\dot{I}_{L,2}}\\
\text{Colpitts 2:}\nonumber\\
C_{ce,2} \dot{V}_{c e,2}&=&I_{L,2}-I_c\left[V_{be,2}\right]\textcolor{blue}{+\frac{1}{R_C}\left(V_{F,1}-\left(V_{ce,2}-V_{be,2}\right)\right)}\\
C_{be,2}\dot{V}_{be,2}&=&-\frac{\left(V_{ee}+V_{be,2}\right)}{R_{ee,2}}-I_b\left[V_{be,2}\right]-I_{L,2}\textcolor{blue}{-\frac{1}{R_C}\left[V_{F,1}-\left(V_{ce,2}-V_{be,2}\right)\right]}\\
L_C\dot{I}_{L,2}&=&V_{cc}-V_{ce,2}+V_{be,2}-R_{L,2}I_{L,2}\textcolor{red}{-k_C L_C\dot{I}_{L,1}}\\
\text{Colpitts 3:}\nonumber\\
C_{ce,3} \dot{V}_{c e,3}&=&I_{L,3}-I_c\left[V_{be,3}\right]\textcolor{blue}{+\frac{1}{R_C}\left(V_{F,2}-\left(V_{ce,3}-V_{be,3}\right)\right)}\\
C_{be,3}\dot{V}_{be,3}&=&-\frac{\left(V_{ee}+V_{be,3}\right)}{R_{ee,3}}-I_b\left[V_{be,3}\right]-I_{L,3}\textcolor{blue}{-\frac{1}{R_C}\left[V_{F,2}-\left(V_{ce,3}-V_{be,3}\right)\right]}\\
L_C\dot{I}_{L,3}&=&V_{cc}-V_{ce,3}+V_{be,3}-R_{L,3}I_{L,3}\textcolor{red}{-k_C L_C\dot{I}_{L,4}}\\
\text{Colpitts 4:}\nonumber\\
C_{ce,4} \dot{V}_{c e,4}&=&I_{L,4}-I_c\left[V_{be,4}\right]\textcolor{blue}{+\frac{1}{R_C}\left(V_{F,2}-\left(V_{ce,4}-V_{be,4}\right)\right)}\\
C_{be,4}\dot{V}_{be,4}&=&-\frac{\left(V_{ee}+V_{be,4}\right)}{R_{ee,4}}-I_b\left[V_{be,4}\right]-I_{L,4}\textcolor{blue}{-\frac{1}{R_C}\left[V_{F,2}-\left(V_{ce,4}-V_{be,4}\right)\right]}\\
L_C\dot{I}_{L,4}&=&V_{cc}-V_{ce,4}+V_{be,4}-R_{L,4}I_{L,4}\textcolor{red}{-k_C L_C\dot{I}_{L,3}}
\end{eqnarray*}
The Colpitts nonlinearity arises from its transistor (2N2222) with gain:
\begin{equation} I_b =
  \begin{cases}
    0, &  V_{be} \le V_{th}\\
    \frac{V_{be}-V_{th}}{R_{on}}, &  V_{be} > V_{th}
  \end{cases}
, \qquad I_c=\beta_f I_b.
\end{equation}
The FHN nonlinearity arises from the operational amplifier (AD844) which is arranged to provide a piecewise linear cubic function~\cite{Keener_1983jy} in the form: $f(V_{F,i})=[V_{F,i}-h(V_{F,i})]/R_{1,i}$, where
\begin{equation}\label{eq: fhn_piecewise} h(V_F) =
  \begin{cases}
    -V_{R_+}, &  V_F \le \frac{-R_{4,i}V_{R_+}}{R_{2,i}+R_{4,i}}\\
    \left(\frac{R_{2,i}}{R_{4,i}}+1\right)V_F, &  \frac{-R_{4,i}V_{R_+}}{R_{2,i}+R_{4,i}} < V_F <\frac{R_{4,i}V_{R_+}}{R_{2,i}+R_{4,i}}\\
    V_{R_+}, &  V_F \ge \frac{R_{4,i}V_{R_+}}{R_{2,i}+R_{4,i}}.
  \end{cases}
\end{equation}
Here, $V_{R_+}=6.5V$ is inferred from the FHN amplitude.
Note that the expression $h(V_F)$ is independent of $R_{2,i}$ and $R_{4,i}$ if $R_{2,i}=R_{4,i}$.

 {
 To study the cluster synchronization dynamics of this multilayer system, we vary one coupling and hold the others constant. 
 We vary the Colpitts intralayer coupling, $k_C$, by finely varying the separation of the inductors of the Colpitts. We immobilize Colpitts layer nodes $C_1$ and $C_3$, and move $C_2$ and $C_4$ with a caliper; this precisely and simultaneously changes the $C_1$ to $C_2$ and $C_3$ to $C_4$ separation. 
 We impose the constant Colpitts-FHN intralayer coupling, $R_C$,with a fixed resistor.
We hold the FHN-jumper intralayer coupling, $k_{FJ}$, constant as well; we impose a constant separation between the inductors of the FHN and the jumper with plastic clips. }

{
This experimental testbed differs fundamentally from the numerical examples we consider. 
Although we model our circuit as noise-free, electrical circuits are subject to several kinds of noise, including shot noise, burst noise, and thermal noise~\cite{instruments2007noise,bryantask}. 
Components such as transistors and operational amplifiers have noise profiles, with stronger noise at higher frequencies.
Additionally, we neglect stray resistance, inductance, and capacitance which inevitably arises even in circuits which are carefully crafted to minimize such issues. 
Finally, small parameter mismatches lead the oscillators in each layer to be slightly heterogeneous; uncoupled, the nodes have slightly different frequencies and amplitudes. }


{To write the system in the general form \eqref{eq:gen_dyn},} we first assume identical oscillators on each layer, i.e.,
$L_{J1}=L_{J2}=L_{FJ};$
$~L_{F1,J1}=L_{F2,J2}=L_{FJ};$
$~L_{F,1} = L_{F,2}=L_F;$
$~C_{F,1} = C_{F,2}=C_F;$
$~R_{3,1} = R_{3,2}=R_3;$
$~C_{ce,1}=C_{ce,2}=C_{ce,3}=C_{ce,4}=C_{ce};$
$~C_{be,1}=C_{be,2}=C_{be,3}=C_{be,4}=C_{be};$
$~L_{C,1}=L_{C,2}=L_{C,3}=L_{C,4}=L_{C};$
$~R_{ee,1}=R_{ee,2}=R_{ee,3}=R_{ee,4}=R_{ee}.$
We can then write
\begin{equation}
    \dot {\textbf x}_i^\alpha = F^\alpha({\textbf x}_i^\alpha) +
    \underbrace{ \sigma^{\alpha\alpha}
    \sum_{j=1}^{N^\alpha} A^{\alpha\alpha}_{ij} H^{\alpha\alpha} ({\textbf x}_j^\alpha)}_{\text{intra-layer coupling}}
    +
    \underbrace{\sum_{\beta\neq\alpha}
    \sigma^{\alpha\beta}
    \sum_{j=1}^{N^\beta} A^{\alpha\beta}_{ij} H^{\alpha\beta} ({\textbf x}_j^\beta).}_{\text{inter-layer coupling}}
\end{equation}
Where $x_1^\alpha=I_J$, $x_2^\alpha=V_J$ are the current and the voltage in the jumper;
$x_{i,1}^\beta=I_{F,i}$, $x_{i,2}^\beta=V_{F,i}$ are the current and the voltage of FHN $i$;
$x_{i,1}^\gamma=V_{ce,i}$, $x_{i,2}^\gamma=V_{be,i}$, $x_{i,3}^\gamma=I_{L,i}$ are the voltages and the current of Colpitts $i$.
\\
\\
\textbf{Jumper layer:}
\[
{\textbf F}^\alpha({\textbf x}^\alpha) = 
\begin{pmatrix} 
(R_J x_1^\alpha+x_2^\alpha-v_J^0)/\left(2 L_{FJ} (k_{FJ}^2 L_{FJ}-L_F) \right)\\
x_1^\alpha/C_J
\end{pmatrix},
\]
\[
{\textbf H}^{\alpha\beta}({\textbf x}_i^\beta) = 
\begin{pmatrix} 
2(R_3 x_{i,1}^\beta-x_{i,2}^\beta)\\
0
\end{pmatrix},
\quad
\sigma^{\alpha\beta} = \frac{k_{FJ}}{2(L_F-L_{FJ} k_{FJ}^2)},
\quad
A^{\alpha\beta}=\begin{bmatrix}1 & 1\end{bmatrix}
\]
\textbf{FHN layer}
\[
{\textbf F}^\beta({\textbf x}^\beta) = 
\begin{pmatrix} 
 (-2L_F(R_3 x^\beta_1-x^\beta_2)-k_{FJ}L_F v^0_J)/\left(2 L_F (L_F-k_{FJ}^2 L_{FJ})\right) \\
(-f(x_2^\beta)-x_1^\beta-2 x_2^\beta/R_C)/C_F
\end{pmatrix}, 
\]
\[
{\textbf H}^{\beta\beta}({\textbf x}_j^\beta) = 
\begin{pmatrix} 
R_3 x_{j,1}^\beta-x_{j,2}^\beta\\
0
\end{pmatrix},
\quad
\sigma^{\beta\beta} =\frac{L_{FJ}k_{FJ}^2}{2 L_{F} (L_{FJ}k_{FJ}^2-L_F)},
\quad
A^{\beta\beta}=\begin{bmatrix}
-1 & 1 \\ 1 & -1
\end{bmatrix}
\]
\[
{\textbf H}^{\beta\alpha}({\textbf x}^\alpha) = 
\begin{pmatrix} 
R_J x_{1}^\alpha+x_{2}^\alpha\\
0
\end{pmatrix},
\quad
\sigma^{\beta\alpha} = \frac{k_{FJ}}{2(L_F-L_{FJ}k_{FJ}^2)}=\sigma^{\alpha\beta},
\quad
A^{\beta\alpha}=\begin{bmatrix}1 \\ 1\end{bmatrix}=(A^{\alpha\beta})^T.
\]
\[
{\textbf H}^{\beta\gamma}({\textbf x}^\beta,{\textbf x}^\gamma) = 
\begin{pmatrix} 
0\\
(x_1^\gamma-x_2^\gamma)/C_F
\end{pmatrix},
\quad
\sigma^{\beta\gamma} = \frac{1}{R_C},
\quad
A^{\beta\gamma}=\begin{bmatrix}
1 & 1 & 0 & 0 \\ 
0 & 0 & 1 & 1\end{bmatrix}.
\]
Note that the inductive coupling through the jumper generates a diffusive coupling in the $\beta$ layer (as can be seen in $A^{\beta\beta}$). 
{This is a consequence of the fact that the inductive coupling acts on the derivative of the current $\dot I_{F,i}$, see the orange terms in the Jumper equations. As a result, this induces a direct diffusive coupling between the currents $I_{F,i}$ of the FHN circuits.}
This is represented in figure \ref{fig:circuit_mul} by a dashed line in the $\beta$ layer.

\textbf{Colpitts layer}
\[
{\textbf F}^\gamma({\textbf x}^\gamma) = 
\begin{pmatrix} 
 (x^\gamma_3-I_C(x^\gamma_2)-(x^\gamma_1-x^\gamma_2)/R_C)/C_{ce} \\
(-(V_{ee}+x_2^\gamma)/R_{ee}-I_b(x_2^\gamma)-x_3^\gamma+(x_1^\gamma-x_2^\gamma)/R_C))/C_{be}\\
(x_2^\gamma-x_1^\gamma-R_L x_3^\gamma+V_{CC}(1-k_C))/(L(1-k_C^2))
\end{pmatrix}, 
\]
\[
{\textbf H}^{\gamma\gamma}({\textbf x}_j^\gamma) = 
\begin{pmatrix} 
0\\
0\\
-(x_{j,2}^\gamma-x_{j,1}^\gamma-R_L x_{j,3}^\gamma)
\end{pmatrix},
\quad
\sigma^{\gamma\gamma} = \frac{k_C}{L(1-k_C^2)},
\quad
A^{\gamma\gamma}=\begin{bmatrix}
0 & 1 & 0 & 0 \\
1 & 0 & 0 & 0 \\
0 & 0 & 0 & 1 \\
0 & 0 & 1 & 0 
\end{bmatrix}
\]
\[
{\textbf H}^{\gamma\beta}({\textbf x}^\beta,{\textbf x}^\gamma) = 
\begin{pmatrix} 
x_2^\beta/C_{ce}\\
x_2^\beta/C_{be}\\
0
\end{pmatrix},
\quad
\sigma^{\beta\gamma} = \frac{1}{R_C}=\sigma^{\gamma\beta},
\quad
A^{\gamma\beta}=(A^{\beta\gamma})^T.
\]

\null
\null
\null

\section*{Acknowledgements}
This work was supported by the National Science Foundation through grant (Grant No. 1727948) and the Office of Naval Research through the Naval Research Laboratory’s Basic Research Program. We thank B.Hunt for noting that the equivalence classes of each layer 
form cosets of a subgroup of that layer's group  and M. Hossein-Zadeh and K. Huang for their support with the experimental activities.

\section*{Author contributions} 

F.D.R., L.P., and F.S. worked on the theory. K.B. performed the experiment with the help of F.D.R. A.S. worked on the analysis of real data. I.K. designed  the algorithm for generating multilayer networks with symmetries.  
F.S. supervised all aspects of the project. F.S. wrote the paper with the help of K.B. and L.P.

\section*{Competing interests} 

The authors declare no competing interests.


\end{document}